\def\varnothing{∅}
\DeclareMathSymbol{α}{\mathalpha}{letters}{"0B}
\DeclareMathSymbol{β}{\mathalpha}{letters}{"0C}
\DeclareMathSymbol{γ}{\mathalpha}{letters}{"0D}
\DeclareMathSymbol{δ}{\mathalpha}{letters}{"0E}
\DeclareMathSymbol{ε}{\mathalpha}{letters}{"0F}
\DeclareMathSymbol{ζ}{\mathalpha}{letters}{"10}
\DeclareMathSymbol{η}{\mathalpha}{letters}{"11}
\DeclareMathSymbol{θ}{\mathalpha}{letters}{"12}
\DeclareMathSymbol{ι}{\mathalpha}{letters}{"13}
\DeclareMathSymbol{κ}{\mathalpha}{letters}{"14}
\DeclareMathSymbol{λ}{\mathalpha}{letters}{"15}
\DeclareMathSymbol{μ}{\mathalpha}{letters}{"16}
\DeclareMathSymbol{ν}{\mathalpha}{letters}{"17}
\DeclareMathSymbol{ξ}{\mathalpha}{letters}{"18}
\DeclareMathSymbol{π}{\mathalpha}{letters}{"19}
\DeclareMathSymbol{ρ}{\mathalpha}{letters}{"1A}
\DeclareMathSymbol{σ}{\mathalpha}{letters}{"1B}
\DeclareMathSymbol{τ}{\mathalpha}{letters}{"1C}
\DeclareMathSymbol{υ}{\mathalpha}{letters}{"1D}
\DeclareMathSymbol{φ}{\mathalpha}{letters}{"1E}
\DeclareMathSymbol{χ}{\mathalpha}{letters}{"1F}
\DeclareMathSymbol{ψ}{\mathalpha}{letters}{"20}
\DeclareMathSymbol{ω}{\mathalpha}{letters}{"21}
\DeclareMathSymbol{ο}{\mathalpha}{letters}{`o}
\DeclareMathSymbol{Γ}{\mathalpha}{letters}{"00}
\DeclareMathSymbol{Δ}{\mathalpha}{letters}{"01}
\DeclareMathSymbol{Θ}{\mathalpha}{letters}{"02}
\DeclareMathSymbol{Λ}{\mathalpha}{letters}{"03}
\DeclareMathSymbol{Ξ}{\mathalpha}{letters}{"04}
\DeclareMathSymbol{Π}{\mathalpha}{letters}{"05}
\DeclareMathSymbol{Σ}{\mathalpha}{letters}{"06}
\DeclareMathSymbol{Υ}{\mathalpha}{letters}{"07}
\DeclareMathSymbol{Φ}{\mathalpha}{letters}{"08}
\DeclareMathSymbol{Ψ}{\mathalpha}{letters}{"09}
\DeclareMathSymbol{Ω}{\mathalpha}{letters}{"0A}
\def\DefUnicodeCharacter#1{%
    \catcode\string`#1=\active
    \begingroup
    \lccode`~=\string`#1\relax
    \lowercase{\endgroup
      \protected\def~}}
\DeclareMathSymbol{∂}{\mathord}{letters}{"40}
\DeclareMathSymbol{∞}{\mathord}{symbols}{"31}
\DeclareMathSymbol{′}{\mathord}{symbols}{"30}
\DeclareMathSymbol{∇}{\mathord}{symbols}{"72}
\DeclareMathSymbol{⊤}{\mathord}{symbols}{"3E}
\DeclareMathSymbol{⊥}{\mathord}{symbols}{"3F}
\DeclareMathSymbol{∀}{\mathord}{symbols}{"38}
\DeclareMathSymbol{∃}{\mathord}{symbols}{"39}
\DeclareMathSymbol{¬}{\mathord}{symbols}{"3A}
\DeclareMathSymbol{∐}{\mathop}{largesymbols}{"60}
\DeclareMathSymbol{⋁}{\mathop}{largesymbols}{"57}
\DeclareMathSymbol{⋀}{\mathop}{largesymbols}{"56}
\DeclareMathSymbol{⋂}{\mathop}{largesymbols}{"54}
\DeclareMathSymbol{⋃}{\mathop}{largesymbols}{"53}
\DeclareMathSymbol{∏}{\mathop}{largesymbols}{"51}
\DeclareMathSymbol{∑}{\mathop}{largesymbols}{"50}
\DeclareMathSymbol{⨂}{\mathop}{largesymbols}{"4E}
\DeclareMathSymbol{⨁}{\mathop}{largesymbols}{"4C}
\DeclareMathSymbol{⨀}{\mathop}{largesymbols}{"4A}
\DeclareMathSymbol{⨆}{\mathop}{largesymbols}{"46}
\DeclareMathSymbol{∧}{\mathbin}{symbols}{"5E}
\DeclareMathSymbol{∨}{\mathbin}{symbols}{"5F}
\DeclareMathSymbol{∩}{\mathbin}{symbols}{"5C}
\DeclareMathSymbol{∪}{\mathbin}{symbols}{"5B}
\DeclareMathSymbol{‡}{\mathbin}{symbols}{"7A}
\DeclareMathSymbol{†}{\mathbin}{symbols}{"79}
\DeclareMathSymbol{⊓}{\mathbin}{symbols}{"75}
\DeclareMathSymbol{⊔}{\mathbin}{symbols}{"74}
\DeclareMathSymbol{⋄}{\mathbin}{symbols}{"05}
\DeclareMathSymbol{∙}{\mathbin}{symbols}{"0F}
\DeclareMathSymbol{÷}{\mathbin}{symbols}{"04}
\DeclareMathSymbol{⊙}{\mathbin}{symbols}{"0C}
\DeclareMathSymbol{⊘}{\mathbin}{symbols}{"0B}
\DeclareMathSymbol{⊗}{\mathbin}{symbols}{"0A}
\DeclareMathSymbol{⊖}{\mathbin}{symbols}{"09}
\DeclareMathSymbol{⊕}{\mathbin}{symbols}{"08}
\DeclareMathSymbol{∓}{\mathbin}{symbols}{"07}
\DeclareMathSymbol{±}{\mathbin}{symbols}{"06}
\DeclareMathSymbol{∘}{\mathbin}{symbols}{"0E}
\DeclareMathSymbol{∖}{\mathbin}{symbols}{"6E}
\DeclareMathSymbol{⋅}{\mathbin}{symbols}{"01}
\DeclareMathSymbol{∗}{\mathbin}{symbols}{"03}
\DeclareMathSymbol{×}{\mathbin}{symbols}{"02}
\DeclareMathSymbol{⋆}{\mathbin}{letters}{"3F}
\DeclareMathSymbol{∝}{\mathrel}{symbols}{"2F}
\DeclareMathSymbol{⊑}{\mathrel}{symbols}{"76}
\DeclareMathSymbol{⊒}{\mathrel}{symbols}{"77}
\DeclareMathSymbol{∥}{\mathrel}{symbols}{"6B}
\DeclareMathSymbol{\mid}{\mathrel}{symbols}{"6A}
\DeclareMathSymbol{⊣}{\mathrel}{symbols}{"61}
\DeclareMathSymbol{⊢}{\mathrel}{symbols}{"60}
\DeclareMathSymbol{\nearrow}{\mathrel}{symbols}{"25}
\DeclareMathSymbol{\searrow}{\mathrel}{symbols}{"26}
\DeclareMathSymbol{\nwarrow}{\mathrel}{symbols}{"2D}
\DeclareMathSymbol{\swarrow}{\mathrel}{symbols}{"2E}
\DeclareMathSymbol{⟺}{\mathrel}{symbols}{"2C}
\DeclareMathSymbol{⇐}{\mathrel}{symbols}{"28}
\DeclareMathSymbol{⇒}{\mathrel}{symbols}{"29}
\DeclareMathSymbol{≤}{\mathrel}{symbols}{"14}
\DeclareMathSymbol{≥}{\mathrel}{symbols}{"15}
\DeclareMathSymbol{≻}{\mathrel}{symbols}{"1F}
\DeclareMathSymbol{≺}{\mathrel}{symbols}{"1E}
\DeclareMathSymbol{≈}{\mathrel}{symbols}{"19}
\DeclareMathSymbol{⪰}{\mathrel}{symbols}{"17}
\DeclareMathSymbol{⪯}{\mathrel}{symbols}{"16}
\DeclareMathSymbol{⊃}{\mathrel}{symbols}{"1B}
\DeclareMathSymbol{⊂}{\mathrel}{symbols}{"1A}
\DeclareMathSymbol{⊇}{\mathrel}{symbols}{"13}
\DeclareMathSymbol{⊆}{\mathrel}{symbols}{"12}
\DeclareMathSymbol{∈}{\mathrel}{symbols}{"32}
\DeclareMathSymbol{∋}{\mathrel}{symbols}{"33}
\DeclareMathSymbol{\gg}{\mathrel}{symbols}{"1D}
\DeclareMathSymbol{\ll}{\mathrel}{symbols}{"1C}
\DeclareMathSymbol{\not}{\mathrel}{symbols}{"36}
\DeclareMathSymbol{⟷}{\mathrel}{symbols}{"24}
\DeclareMathSymbol{←}{\mathrel}{symbols}{"20}
\DeclareMathSymbol{→}{\mathrel}{symbols}{"21}
\DeclareMathSymbol{⊸}{\mathbin}{AMSa}{"28}
\DeclareMathSymbol{∅}{\mathord}{AMSb}{"3F}
\DeclareMathSymbol{≡}{\mathrel}{symbols}{"11}
\DeclareMathSymbol{⌣}{\mathrel}{letters}{"5E}
\DeclareMathSymbol{⌢}{\mathrel}{letters}{"5F}
\DeclareMathDelimiter{↑}
   {\mathrel}{symbols}{"22}{largesymbols}{"78}
\DeclareMathDelimiter{↓}
   {\mathrel}{symbols}{"23}{largesymbols}{"79}
\DeclareMathDelimiter{↕}
   {\mathrel}{symbols}{"6C}{largesymbols}{"3F}
\DeclareMathDelimiter{⇑}
   {\mathrel}{symbols}{"2A}{largesymbols}{"7E}
\DeclareMathDelimiter{⇓}
   {\mathrel}{symbols}{"2B}{largesymbols}{"7F}
\DeclareMathDelimiter{⇕}
   {\mathrel}{symbols}{"6D}{largesymbols}{"77}
\DeclareMathDelimiter{⟩}
   {\mathclose}{symbols}{"69}{largesymbols}{"0B}
\DeclareMathDelimiter{⟨}
   {\mathopen}{symbols}{"68}{largesymbols}{"0A}
\newtheoremstyle{break}{\topsep}{\topsep}{\itshape}{}{\bfseries}{}{5pt}{}
\theoremstyle{break}
\newtheorem{theorem}{Theorem}
\newtheorem{corollary}{Corollary}
\newtheorem{definition}{Definition}
\newtheorem{lemma}{Lemma}
\newtheorem{example}{Example}
\newtheorem{proposition}{Proposition}
\newcommand{\set}[1]{\left\{\,{#1}\,\right\}}
\newcommand{\setw}[2]{\left\{\,{#1}\;\middle|\;{#2}\,\right\}}
\newcommand{\msetunion}[1][]{\mathrel{#1{\tstrike{\cdot}{\cup}}}}
\newcommand{\msetempty}{\tstrike{\cdot}{\varnothing}}
\newcommand{\nodei}[2]{\ensuremath{\textsl{#2}\kern 1pt_{#1}}}
\newcommand{\redex}[2]{\ifmmode#1 -\penalty2000 #2\else#1 $-$ #2\fi}
\newcommand{\sredex}[2]{\ifmmode#1 →\penalty2000 #2\else#1 $→$ #2\fi}
\newcommand{\netvar}[1]{\mathord{\kern 1mm{#1}\kern 1mm}}
\newcommand{\netop}[1]{\mathbin{\kern 2mm{#1}\kern 2mm}}
\newcommand{\netempty}{\netvar{\varnothing}}
\def\ntext#1{\textls[-15]{\textit{#1}}}
\def\ntext#1{\textit{#1}}
\def\tbool{\mathsf{bool}}
\def\tnat#1{\mathsf{nat}_{#1}}
\def\tlist#1#2{\mathsf{list}_{#1}\mskip 1mu{#2}}
\def\tnatp#1{\mathsf{nat}^{#1}}
\def\tlistp#1#2{\mathsf{list}^{#1}\mskip 1mu{#2}}
\def\tcomp#1#2{\mathsf{cmp}_{#1}\mskip 1mu{#2}}
\def\fh#1{\lfloor\kern -1pt\frac{#1}{2}\kern -1pt\rfloor}
\def\ch#1{\lceil\kern -1pt\frac{#1}{2}\kern -1pt\rceil}
\def\sc#1{#1}
\def\pc#1{\bar{#1}}
\def\stime#1{\mathfrak{Time}_{#1}}
\def\sspace#1{\mathfrak{Space}_{#1}}
\def\sst#1{\mathfrak{Space–Time}_{#1}}
\def\ptime#1{\bar{\mathfrak{Time}}_{#1}}
\def\pspace#1{\bar{\mathfrak{Space}}_{#1}}
\def\pst#1{\bar{\mathfrak{Space–Time}}_{#1}}
\DeclareMathOperator{\bigO}{O}
\newcommand*{\tpkt}{\rlap{$\;$.}}
\newcommand*{\tkom}{\rlap{$\;$,}}
\newcommand{\defsym}{\mathrel{:=}}
\newcommand*{\FS}{\mathcal{S}}
\newcommand*{\RS}{\mathcal{R}}
\newcommand*{\activepair}[2]{(#1,#2)}
\newcommand*{\netsize}[1]{{\#{#1}}}
\newcommand*{\blank}{{\sqcup}}
\newcommand*{\lmark}{{\vdash}}
\newcommand*{\rmark}{{\dashv}}
\newcommand*{\lmove}{\text{L}}
\newcommand*{\rmove}{\text{R}}
\newcommand*{\tmstart}{s}
\newcommand*{\tmstop}{t}
\newcommand*{\code}[1]{{\bar #1}}
\newcommand*{\ceil}[1]{\lceil #1 \rceil}
\newcommand*{\yields}[1]{\mathrel{\smash{\xrightarrow[\mathcal{T}]{#1}}}}
\newcommand*{\N}{\mathbb{N}}
\newcommand*{\rsize}[1]{\lVert #1 \rVert}
\def\todo#1{}
\newif\ifextended
\def\subparagraph#1{\smallskip{\noindent\normalsize\bfseries#1.\hskip 0.4ex plus .2ex\relax}\nobreak}
\title{The Complexity of Interaction}
\author{Stéphane Gimenez, Georg Moser}
\begin{document}

\maketitle

\begin{abstract}
In this paper, we analyze the complexity of functional programs written in the interaction-net computation model, an asynchronous, parallel and confluent model that generalizes linear-logic proof nets.
Employing user-defined \emph{sized} and \emph{scheduled} types, we certify concrete time, space and space–time complexity bounds for both sequential and parallel reductions of interaction-net programs by suitably assigning complexity potentials to typed nodes.
The relevance of this approach is illustrated on archetypal programming examples.
The provided analysis is precise, compositional and is, in theory, not restricted to particular complexity classes.
\end{abstract}

\ifextended\else
\category{F.3.2}{Semantics of programming languages}{Program Analysis}
\keywords
Program Analysis, Interaction Nets, Linear Logic, Sequential and Parallel Reductions, Time and Space Bounds.
\fi

\section{Introduction}

Complexity analysis provides bounds on the amount of resources required for a computation (chiefly time or space) relative to an input size.

Suppose that a function call $f(x)$ executes in time $\bigO(|x|)$ and $g(x)$ executes in time $\bigO(2^{|x|})$.
What is the time complexity of the composition $g \circ f$ of these two functions?
Is it perhaps $\bigO(2^{|x|})$?
The real answer is: we don't know.
It depends on the size of $f(x)$, which could be logarithmic, linear, or even exponential (for example in a parallel computation model), among other possibilities, with respect to the size of $x$.
The complexity of the composition $g \circ f$ varies consequently:
\display{\tb{
\begin{tabular}{l@{\hspace{3ex}}l}
  $\square\square\square\square \reducts[log] \square\square \reducts[g] a$ & $\bigO(|x|)$\\[5pt]
  $\square\square\square\square \reducts[id] \square\square\square\square \reducts[g] b$ & $\bigO(2^{|x|})$\\[5pt]
  $\square\square\square\square \reducts[exp] \square\square\square\square\square\square\square\square\square\square\square\square\square\square\square\square \reducts[g] c$ & $\bigO(2^{2^{|x|}})$\\
\end{tabular}
}}

Complexity analysis in the traditional sense is therefore not compositional.
To ensure compositionality, in this paper, we analyze properties of outputs, relying on user-defined types which have been enriched with size and, for parallel reductions, timing information.

We base our study on interaction nets, which provide a tangible cost model for both sequential \emph{and} parallel reductions.
Interaction nets have been used as an execution platform for functional programs~\cite{Mackie:2000,Pinto:2001,Mackie04}, as a conceptual device for the optimal implementation of the $\lambda$-calculus~\cite{Lamping:1990,linear-logic-without-boxes--g-ghontier+m-abadi+j-j-levy,AspertiGN96}, as a general purpose higher-order language~\cite{FMSW:2009,thesis--s-gimenez}, and as a model of distributed computation exemplified by asynchronous abstract hardware~\cite{Lippi:2009}.

Interaction nets provide a Turing-complete computation model, allow a complexity analysis of sequential reductions of (higher-order) functional programs and allow a reasonably painless extension to parallel reductions, an area typically ignored in the literature (see~\cite{HS:2015} for the exception to the rule).
Moreover, as interaction nets incorporate distributed computation, they are of relevance for the study of modern hardware platforms.
Hence, a static complexity analysis of interaction nets is also of interest in its own right.

In addition to providing an analysis for space and time complexity separately, we emphasize the study of \emph{space–time} complexities, i.e., space occupation as a function of time.
On the one hand this is a neat technical tool, as certification of precise time or space complexity bounds for sequential and parallel reductions come as very easy corollaries.
On the other hand, an accurate prediction of the space–time complexity could prove useful in the application of interaction-net technology in the context of distributed computation.

For sequential reductions, user-defined \emph{sized types} allow to keep track of arbitrarily chosen size measures for intermediate results.
From this we obtain a compositional analysis for the space–time complexity of interaction nets which relies on a suitable assignment of potentials to nodes (Theorem~\ref{t:sequential}).
A practical difficulty is the combination of these potentials, because the juxtaposition of two terms that are tied to sequential space–time complexities $f$ and $g$ is a complex convolution $h(t) = \max_{u+v=t} f(u) + g(v)$ where $u$ and $v$ denote the respective times allocated to the reduction of each term.
%
We draw attention to the fact that sequential computation in interaction nets slightly deviates from the standard notion.
Traditionally in a sequential deterministic computation, the instruction to be performed next is predefined for the single execution thread.
However, interaction-net systems rely on the diamond property to guarantee confluence and thus allow “don't-care non-determinism”, which our complexity analysis can handle.

To address parallel reductions, we use timing annotations which can be combined with size annotations to control the schedule of the computation.
The resulting \emph{scheduled types} express guaranteed or (for inputs) expected time limits on data availability.
Based on this we obtain again a precise analysis of space–time complexities for parallel reductions (Theorem~\ref{t:parallel}).
While typing schemes become more involved in the context of parallel reductions, the space–time complexity of a juxtaposition is in this case easily expressed as a sum $h(t) = f(t) + g(t)$.

The remainder of this paper is structured as follows.
Section~\ref{s:interaction-nets} briefly surveys interaction nets and outlines our motivating examples.
\ifextended
In Section~\ref{s:implementation} we link the interaction-net cost model to more traditional notions of time and space complexity.
\fi
In Section~\ref{s:sized-types} we define sized types with a semantic complexity analysis in mind.
These are employed in Section~\ref{s:sequential-analysis}, in which we state and prove our first main theorem concerning computational complexities of sequential reductions.
Section~\ref{s:scheduled-types} introduces scheduled types.
We use them in Section~\ref{s:parallel-analysis} to establish our second main theorem concerning computational complexities of parallel reductions.
In Section~\ref{s:higher-order} we show how our results can be used to analyze weak sequential reductions of higher-order programs.
In Section~\ref{s:related} we discuss related work.
Finally we conclude in Section~\ref{s:conclusion}, where we also report on future work.

\ifextended\else
Further details and additional content are available in an extended technical report~\cite{coi-technical-report}.
\fi

\def\cscheme#1#2{\mb{#2\,{\begin{matrix}#1\end{matrix}}}}
\def\abracket#1{(#1)}
\def\boxcode#1{\parbox{300pt}{\vskip 2pt\tt\footnotesize #1}}

\section{Interaction Nets and their Parallel Reduction}
\label{s:interaction-nets}

The reader is assumed to be familiar with interaction nets as described by Lafont in \cite{interaction-nets--y-lafont}.
Various formal definitions of interaction nets can be found elsewhere in the literature \cite{thesis--d-mazza, thesis--l-vaux, Perrinel:2014}, but an intuitive understanding of the underlying graph-rewriting technology should be sufficient in the context of this paper.

An \emph{interaction-net system} is a pair $(\FS,\RS)$ consisting of a set of \emph{symbols} $\FS$ used to label \emph{nodes} and an associated set of \emph{reduction rules} $\RS$.
Each symbol has an integer \emph{arity} that dictates the number of \emph{auxiliary ports} which nodes labeled with this symbol possess.
Additionally, each node possesses exactly one distinguished \emph{principal port}.
Graphically, in the following set of labeled nodes, which will be used to represent constructors ($\ntext{zero}$, nullary, $\ntext{succ}$, unary) and operations ($\ntext{add}$ and $\ntext{mul}$, binary) over natural numbers, auxiliary ports are located at the top and principal ports are located at the bottom.
\display{
  \mb{
\neednet{asy/a407ccfe}
    \quad
\neednet{asy/e22a4b99}
    \quad
\neednet{asy/cd96dfbc}
    \quad
\neednet{asy/72527180}
    \quad \dots
  }
}

An \emph{interaction net} is a graph built from such nodes, where each port can be connected to one other port by one wire.
Unconnected ports of a net are called free ports and are collectively referred to as the \emph{interface} of this net.
By design, interaction-net redexes consist of two nodes only (whose labels are represented abstractly by $⋆_1$ and $⋆_2$ in the following picture) that are connected through their principal ports.
Reduction rules reduce them in context to a given net:
\display{
  \opreducts{
\neednet{asy/46cfe2f2}
    \and
\neednet{asy/20bad5fa}
  }
}

To ensure determinism when firing a single redex, only one reduction rule is allowed per symbol pair and the net $N(⋆_1, ⋆_2)$ is assumed symmetric if $⋆_1 = ⋆_2$.
Because redexes cannot overlap, the reduction enjoys the diamond property and can be parallelized easily; normal forms are unique.
Examples can be found in \cite{interaction-nets--y-lafont}.

Interaction nets form a simple and realistic computation model as the locating, recording and firing of one redex can all be implemented in constant time and constant space on standard computer architectures.

\ifextended\else
With respect to sequential reduction, we show \cite{coi-technical-report} that interaction nets (without boxes) form a \emph{reasonable}~\cite{Boas:1990} cost model for time.
Computations on Turing machines can be simulated step by step with interaction nets; while, conversely, computations of nets can be performed on a Turing machine in polynomial time.
The latter result follows by a straightforward encoding of the net as an adjacency list of nodes.
The space required for this representation differs from the number of nodes by the required logarithmic factor needed to encode node identifiers.
\fi

\subparagraph{Typed Interaction}
Types are syntactic expressions built from base types, which may expect other types as arguments, and polymorphic variables denoted by $A$, $B$, etc. which can be instantiated at will.

As a running example, we will consider the following set of typed primitives for natural numbers, abstractions and lists where $\mathsf{nat}$ (nullary), $⊸$ (infix binary), $\mathsf{list}$ (unary) and $\oc$ (unary, called \emph{exponential type} and used to mark polymorphically replicable data) are used as base types.
This includes most of the essential ingredients of a typical functional programming language.
Yet, our methodology is not restricted to this set of symbols and associated reduction rules.
\display{
\neednet{asy/9730926c}
  \and
\neednet{asy/ce99b99c}
  \and
\neednet{asy/701b85e9}
  \and
\neednet{asy/35cb7e1f}
  \also
\neednet{asy/c5096bf4}
  \and
\neednet{asy/8c6db1c6}
  \and
\neednet{asy/7545785a}
  \and
\neednet{asy/25791321}
  \also
\neednet{asy/c449de53}
  \and
\neednet{asy/3c7a97d7}
  \and
\neednet{asy/f1c3aedc}
  \also
\neednet{asy/1a45b9f4}
  \and
\neednet{asy/8352ba05}
}

In the above \emph{typing schemes} (one has been provided for every symbol), ports have been oriented and attributed a type.
Typically, nodes used as type constructors may admit inputs as auxiliary ports and they output an object of the corresponding type on their principal port.
Other nodes can be considered as functions that seek to interact with their first arguments (provided as inputs on their principal ports), may expect more arguments as additional inputs on auxiliary ports and may output any number of results on remaining auxiliary ports.


In a \emph{typed interaction net}, wires and free ports are assigned types which must be instances of the types from the typing schemes associated to the symbols that label the nodes to which they are attached.
In other words, instances of typing schemes which have been assigned to nodes have to unify on every wire, with matching type orientations:
\display{
\neednet{asy/b9457ade}
  \and
\neednet{asy/ca7544b5}
  \and
\neednet{asy/ca422ab4}
  \also
\neednet{asy/447171da}
  \and
\neednet{asy/726a296c}
}

Reduction rules for addition of natural numbers are usually defined as follows:
\display{
  \opreducts{
\neednet{asy/ca7544b5}
    \and
\neednet{asy/62a58d26}
  }
  \and
  \opreducts{
\neednet{asy/9e8bba9e}
    \and
\neednet{asy/c9ffad70}
  }
}

In a \emph{typed interaction-net system}, symbols are provided together with typing schemes and all reduction rules $L \reducts R$ are assumed to be typed and to preserve the types of their interfaces.
This means that $L$ and $R$ are typed interaction nets and the typing of the interface of $L$ matches the one of $R$.
This assumption entails a subject reduction property: any reduct of a typed net can be typed in a way that preserves the typing of its interface.

\subparagraph{Replication}
Nodes $\ntext{fun}$ and $\ntext{app}$ together with the following reduction rule suffice to encode the linear $\lambda$-calculus; see \cite{thesis--s-gimenez}.
\display{
  \opreducts{
\neednet{asy/95c86ec3}
    \and
\neednet{asy/54125809}
  }
}

In order to ease the understanding of this rule, virtual body, argument and continuation passed to the function can be conceptualized as a context:
\display{
  \opreducts{
\neednet{asy/281a75e5}
    \and
\neednet{asy/1a1109c2}
  }
}

In order be used as an expressive higher-order language similar to the full $\lambda$-calculus, polymorphic replication is necessary.
Various implementations exist in interaction nets; some rely on an infinite family of sharing nodes (as in sharing graphs \cite{linear-logic-without-boxes--g-ghontier+m-abadi+j-j-levy}); other use special devices called \emph{boxes} which, strictly speaking, are not interaction-net nodes.
When associated to weak reduction rules, the reduction of boxes admits the diamond property and moreover corresponds quite closely to the weak $\beta$-reductions implemented by usual functional programming languages.
All the content presented in this paper is compatible with the use of such boxes.
Namely, we chose to work with \emph{functorial promotion boxes} \cite{llwiki,functorial-boxes-in-string-diagrams--p-a-mellies}, which are parameterized by a net (represented below as $N$), together with the usual \emph{weakening} ($w$), \emph{contraction} ($c$), \emph{dereliction} ($?$) and \emph{digging} ($¿$) nodes from linear logic.
\display{
\neednet{asy/ea62c1a6}
  \also
\neednet{asy/1fb38d82}
  \and
\neednet{asy/925bd1f9}
  \and
\neednet{asy/9b4debfa}
  \and
\neednet{asy/63bbed46}
}

In a weak setting, nets that parameterize boxes are not reduced internally.
Boxes are reduced externally and non-closed boxes ($k > 0$) may only merge with other boxes until they are eventually closed (some types are omitted for clarity):
\display{
  \opreducts{
\neednet{asy/241d9659}
    \and
\neednet{asy/e4756000}
  }
}
Closed boxes ($k = 0$) can be:
\begin{itemize}
  \item erased by \emph{weakenings}:
    \display{
      \opreducts{
\neednet{asy/3fed7623}
        \and
        \mb{\netempty}
      }
    }
  \item duplicated by \emph{contractions}:
    \display{
      \opreducts{
\neednet{asy/f47fcf7d}
        \and
\neednet{asy/801585d3}
      }
    }
  \item opened by \emph{derelictions}:
    \display{
      \opreducts{
\neednet{asy/4e9e6593}
        \and
\neednet{asy/0506499f}
      }
    }
  \item and doubled by \emph{diggings}:
    \display{
      \opreducts{
\neednet{asy/da969d82}
        \and
\neednet{asy/8233991c}
      }
    }
\end{itemize}

As an illustration, Figure~1 depicts the 5-step fully parallel reduction (i.e. all available redexes are fired together at once) that reduces to normal form the mapping of the successor constructor to the list of natural numbers $[0, 0]$, according to the following reduction rules:
\display{
  \opreducts{
\neednet{asy/c2bd5eba}
    \and
\neednet{asy/66d826e5}
  }
  \and
  \opreducts{
\neednet{asy/38a15197}
    \and
\neednet{asy/8e7750f4}
  }
}

\figuredisplay[1.15]{Fully parallel reduction of a functional program}{
  \op{\mb{\reducts_p}}{
\neednet{asy/0a061120}
    \and
\neednet{asy/233ea477}
    \and
\neednet{asy/91eca027}
    \and
    {}
  }
  \op{\mb{\reducts_p}}{
\neednet{asy/b66ccea7}
    \and
\neednet{asy/fffe3d77}
    \and
\neednet{asy/6ccdd9bf}
  }
}

\subparagraph{Timed Interaction}
We consider a generalization of interaction-net systems to timed reduction rules $L \reducts[d] R$, in which the label $d ≥ 0$ denotes a time duration in a chosen time domain $T$, which can either be discrete or continuous.

We define \emph{timed sequential reduction} $\reducts_s$ as the extension of the timed reduction rules generated by the following transitivity property:
$N \reducts[t]_s M$ if there exists $t_1$, $t_2$ and a net $P$ such that $t = t_1 + t_2$ and $N \reducts[t_1]_s P$ and $P \reducts[t_2]_s M$.
The timed sequential reduction satisfies a timed diamond property: if $N \reducts[t_1]_s P_1$ and $N \reducts[t_2]_s P_2$ then there exist $M$ such that $P_1 \reducts[t_2]_s M$ and $P_2 \reducts[t_1]_s M$.
This ensures that all reductions to normal form have the same duration.

The definition of \emph{timed parallel reduction} $\reducts_p$ requires labeling every instance of a redex $r$ with a counter value $c(r)$ (defaulting to $0$ initially) that will always be less than the time $d(r)$ associated to the corresponding reduction rule.
We perform the parallel reduction $N \reducts[t]_p M$ of duration $t$ of a net $N$ (assumed not to be in normal form) by increasing all redex counters by $t$ if all counters satisfy $c(r) + t < d(r)$. Otherwise, by firing simultaneously all redexes which minimize $d(r) - c(r)$, then increasing other counters by the obtained minimal value $t_0$, and pursuing recursively the reduction for a duration of $t - t_0$.
The parallel reduction $\reducts[t]_p$ is strictly deterministic and satisfies the same transitivity property as the sequential reduction.
\todo{example!}

In this paper, for simplicity, we will only consider examples where reduction rules (with the exception of type-conversion rules which will be defined later) are arbitrarily assigned unitary time durations, and correspond to standard interaction-net systems.
The theory however allows one to take into account various implementation details since it is general and applicable to arbitrary durations as well.

\subparagraph{Cost Model}
Given a timed reduction $\reducts$ and a notion of space occupation $|\cdot|$ for nets (typically, the number of nodes) in a chosen space domain $S$, we say that $N$ admits:
\begin{itemize}
\item $τ ∈ T$ as a time bound if whenever $N \reducts[t] M$, $t ≤ τ$.
\item $σ ∈ S$ as a space bound if whenever $N \reducts[t] M$, $|M| ≤ σ$.
\item $γ : T \rightarrow S$ as a space–time bound if whenever $N \reducts[t] M$, $|M| ≤ γ(t)$.
\end{itemize}

In the following sections, we provide methods to compute such bounds for any net in a given interaction-net system, both for the timed sequential reduction $\reducts_s$ and the timed parallel reduction $\reducts_p$.
These methods rely on user-defined assignments of potentials to typed nodes, which must be provided together with the defined interaction-net system.
We will illustrate the use of these methods by providing such assignments for all the nodes we have introduced.

\ifextended
Related to this cost model, a notion of complexity can be defined for nets which expect inputs.
For example, the time complexity $\mathfrak{T}_N(n)$ of a net $N$ with respect to the size $n$ of one of its input can be defined as the maximum of the time bounds obtained for the following compositions of nets, where $I$ is a net taken in a given set $\mathcal{I}_n$ of inputs (in normal form) of size $n$.
\display{
\neednet{asy/8406a608}
}

In the following developments, types themselves will be annotated with sizes.
Consequently, the set $\mathcal{I}_n$ can naturally be chosen as the set $\mathcal{N}_{A(n)}$ of closed normal-form inhabitants of the corresponding sized type $A(n)$.
\display{
  \releq{
    \mb{\mathfrak{T}_N(n)}
    \and
    \mb{\max_{I∈\mathcal{N}_{A(n)}}\: τ\left(
\neednet{asy/242d525e}
    \right)
    }
  }
}

This notion of complexity generalizes easily to nets expecting multiple inputs.
Moreover, because sized types may freely be parameterized by any number of size variables, this notion also generalizes to inputs measured in more than a single size parameter.
In particular, some of our examples are concerned with the complexity of operations on lists whose lengths are bounded by $n$ containing elements whose size is bounded by $m$.
\fi

\ifextended
  \section{An Implementation of Interaction-net Reductions}
\label{s:implementation}

In this section, we link the above cost model for interaction-net systems with more traditional cost models.
We restrict our attention to sequential reductions on finite interaction-net systems without the use of boxes.

With respect to sequential reduction on an interaction model (without boxes), we show that interaction nets form a \emph{reasonable}~\cite{Boas:1990} cost model for time.
On the one hand computations on Turing machines can be simulated step by step with interaction nets. On the other, a computation of a net $N$ can be computed on a Turing machine in polynomial time with respect to the original reduction length.
The latter follows by a straightforward encoding of the net as an adjacency list of nodes.
However, this (obvious) encoding is not sufficient to obtain a constant-factor overhead in space, but requires a logarithmic overhead.
We recall the \emph{invariance thesis}~\cite{Boas:1990}:
\begin{quote}
  \emph{Reasonable machines can simulate each other within polynomially bounded
    overhead in time and a constant-factor overhead in space.}
\end{quote}

Firstly, we show that any computation on a Turing machine (TM for short) can be simulated step by step within a suitably defined interaction-net system.
To simplify the encoding, we restrict our attention to single-tape TMs, whose tapes grow infinitely in one direction.
TMs are represented as quintuples $\mathcal{T} = (Q,\Sigma,\delta,\tmstart,\tmstop)$, where $Q$ denotes the finite set of states (including the start state $\tmstart$ and halting state $\tmstop$);
$\Sigma$ denotes the (tape) alphabet, including a dedicated symbol for the left-end marker $\lmark$ and the blank symbol $\blank$;
finally $\delta$ denotes the transition function.
The precise definition is not of relevance here, as these definitions are standard, cf.~\cite{Kozen:1997}.

\begin{definition}
Let $\mathcal{T} = (Q,\Sigma,\delta,\tmstart,\tmstop)$ be a TM.
We define the interaction-net system $(\FS,\RS)$ \emph{corresponding} to $\mathcal{T}$.
For each letter $a \in \Sigma$, there exists a symbol $\code{a} \in \FS$ and for each state $p \in Q$, we make sure that $\{\code{p}_\lmove, \code{p}_\rmove\} \subseteq \FS$.
Furthermore, we add a special symbol, $\rmark \in \FS$, which encodes the end of the tape.
The arity of all the symbols except $\lmark$ and $\rmark$ is one, while the arity of the endmarkers is zero.
The set of reduction rules $\RS$ is defined as follows:
If $\delta(p,a) = (q,b,\rmove)$, then $\RS$ includes the reduction rules:
\display{
  \opreducts{
\neednet{asy/9d22fcbc}
    \and
\neednet{asy/5b0d0afa}
  }
  \opreducts{
\neednet{asy/62b0c9bd}
    \and
\neednet{asy/93cdbd1a}
  }
}
Otherwise, if $\delta(p,a) = (q,b,\lmove)$, then $\RS$ includes the reduction rules:
\display{
  \opreducts{
\neednet{asy/9d22fcbc}
    \and
\neednet{asy/cc867f63}
  }
  \and
  \opreducts{
\neednet{asy/62b0c9bd}
    \and
\neednet{asy/cc867f63}
  }
}

In addition, $\RS$ contains the following reduction rule to handle the right endmarker:
\display{
  \opreducts{
\neednet{asy/61c357a5}
    \and
\neednet{asy/536b4287}
  }
}

\end{definition}

Configurations of a TM are denoted as triples $(p,y,n)$, where $p \in Q$, $y \in \Sigma^\ast$, and $n \in \N$ denotes the position of the head.
For example $(\tmstart,\lmark x \blank^\infty,0)$ denotes the initial configuration of $\mathcal{T}$ with the word $x$ on the single tape, and the tape head pointing to the left endmarker~$\lmark$.

Let $\mathcal{T}$ be a TM, its single tape is represented as a string of cells, whose ends are marked by the (codes of the) endmarkers and the position of the head is represented by a cell labelled with the current state.
We say that a net $N$ \emph{encodes} a configuration $(p,y,n)$, if the word $y$ can be read off from $N$ starting from the cell $\lmark$ and continuing to the cell $\rmark$, where cells representing states are ignored.
Furthermore the active pair of $N$ occurs at the $n$\textsuperscript{th} position (counting only symbols of the alphabet).

\begin{lemma}
Let $\mathcal{T}$ be a TM and let $(\FS,\RS)$ be the corresponding interaction-net system.
Suppose $(\tmstart,\lmark x \blank^\infty,0) \yields{\ast} (\tmstop,y,n)$.
Then there exists a net $N$ built on $\FS$ such that $N \reducts[\ast] M$, where $N$ encodes the initial configuration and $M$ encodes the final configuration.
\end{lemma}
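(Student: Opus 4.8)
The plan is to prove the lemma by induction on the length $k$ of the given halting run $(\tmstart,\lmark x \blank^\infty,0) = C_0 \yields{} C_1 \yields{} \cdots \yields{} C_k = (\tmstop,y,n)$, reducing it to a single-step simulation statement. First I would make the encoding concrete: the net $N$ associated with the initial configuration is the linear chain $\code{\lmark} - \code{x_1} - \cdots - \code{x_{\lvert x\rvert}} - \rmark$ whose wires connect each node's auxiliary port to the principal port of its right neighbour, with the state node $\code{\tmstart}$ inserted at position $0$ so that it forms the active pair with $\code{\lmark}$. One checks directly that reading the alphabet symbols from $\code{\lmark}$ to $\rmark$ (skipping the state node) yields $x$ and that the active pair sits at position $0$, so $N$ encodes $C_0$; the infinite blank suffix $\blank^\infty$ is left implicit, to be materialised lazily on demand.

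The heart of the argument is the following single-step claim: if a net $N'$ encodes a configuration $(p,y,n)$ with $p \neq \tmstop$ and $(p,y,n) \yields{} (p',y',n')$, then $N'$ reduces in one or two steps to a net encoding $(p',y',n')$. I would prove this by case analysis on $\delta(p,a)$, where $a$ is the scanned symbol, and, orthogonally, on the orientation of the state node. A state node comes in two forms, $\code{p}_{\rmove}$ and $\code{p}_{\lmove}$, recording whether the head reached its current cell from the left or from the right; since the action performed at a cell depends only on the pair $(p,a)$ and not on this orientation, each transition of $\mathcal{T}$ is realised by two reduction rules, one per orientation. In every case the rule rewrites the scanned cell from $\code{a}$ to $\code{b}$, shifts the state node onto the neighbouring cell in the prescribed direction, and relabels it with the new state and the orientation dictated by the move direction, so that the new active pair coincides with the head's new position; verifying that the word read off and the positional index transform as $(y,n) \mapsto (y',n')$ is then routine bookkeeping. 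Iterating this claim along $C_0,\dots,C_k$ and using transitivity of $\reducts[\ast]$ produces nets $N = N_0 \reducts[\ast] N_1 \reducts[\ast] \cdots \reducts[\ast] N_k = M$ with $N_i$ encoding $C_i$; since $\tmstop$ carries no outgoing transition, its active pair is irreducible and $M$ encodes the final configuration $(\tmstop,y,n)$, as required.

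The main obstacle is the reconciliation of the machine's one-sided-infinite blank tape with a finite net. When the head moves right beyond the last written cell it meets $\rmark$ rather than a blank cell, and here the dedicated right-endmarker rule is essential: it converts $\rmark$ into a fresh $\code{\blank}$ cell followed by a new $\rmark$ and repositions the head to scan the new blank, so that a single TM step into unexplored territory is simulated by this extension step followed by the ordinary transition step (hence the \emph{one or two steps} above). I would treat this as a separate sub-case of the induction and check carefully that the positional count — the requirement that the active pair occur at the $n$th alphabet symbol — is preserved when the chain grows, since this is exactly where off-by-one errors hide. A minor side condition, which I would record explicitly, is that $\mathcal{T}$ never attempts to move left of $\lmark$: because $\code{\lmark}$ has arity zero there is no cell to its left and no rule would apply, so correctness of the simulation relies on this standard well-formedness assumption on the machine.
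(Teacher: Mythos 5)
Your proof is correct and follows essentially the same route as the paper, whose own proof of this lemma is simply ``By construction.'' Your write-up supplies the details that the paper leaves implicit --- the induction on the run length, the two-rules-per-transition case split on the state node's orientation, and the lazy materialisation of blanks via the right-endmarker rule --- all of which are consistent with the construction given in the paper's preceding definition.
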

\begin{proof}
By construction.
\end{proof}

The next proposition is a direct consequence of the lemma.
\begin{proposition}
Let $\mathcal{T} = (Q,\Sigma,\delta,\tmstart,\tmstop)$ be a TM computing a partial function $f$
with range $\Sigma^\ast$ and let $(\FS,\RS)$ denote the corresponding interaction-net system.
Then for each sequence of words $x_1,\dotsc,x_n$ (the arguments of $f$), there exists a net $N$ based on $\FS$ such that the computation of $f$ on $\mathcal{T}$ can be simulated step by step by $N$.
\end{proposition}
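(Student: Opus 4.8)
The plan is to obtain the proposition as an immediate consequence of the preceding lemma, once two points that the lemma leaves implicit are made precise: the encoding of several arguments on a single tape, and the refinement of the lemma's reachability statement into a genuine \emph{step-by-step} correspondence.

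First I would fix the input encoding. Since $\mathcal{T}$ computes $f$ from the arguments $x_1,\dots,x_n$, it already expects them presented on its single tape in some fixed format, say as a word $x = x_1\,\blank\,x_2\,\blank\,\cdots\,\blank\,x_n$ over $\Sigma$ using whatever separator convention $\mathcal{T}$ uses. I then take $N$ to be the net encoding the initial configuration $(\tmstart,\lmark x\blank^\infty,0)$ exactly as described in the paragraph preceding the lemma: a chain of cells $\code a$ for the letters of $\lmark x$, terminated by the endmarker $\rmark$, together with a state node encoding $\tmstart$ placed so that the active pair sits at the left endmarker (position $0$).

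Second, I would establish the single-step simulation underlying the lemma's ``by construction''. The claim is: whenever $(p,y,m) \yields{} (p',y',m')$ is one transition of $\mathcal{T}$ with $p \neq \tmstop$ and a net $N'$ encodes $(p,y,m)$, then $N' \reducts N''$ (in one reduction, or in the boundary case two) for some $N''$ encoding $(p',y',m')$. This is a finite case analysis over the shape of the matching rule from the Definition: for $\delta(p,a) = (q,b,\rmove)$ the first pair of rules moves the state node past the cell it reads; for a left move the second pair does the symmetric thing; and when the head would run past the written portion of the tape, the dedicated $\rmark$ rule first materialises a fresh cell $\code\blank$, accounting for the single extra reduction. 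Because redexes in an interaction net cannot overlap and the system is deterministic by construction, the reduction out of $N'$ is forced to be precisely the one matching the applicable transition $\delta(p,a)$, so no spurious reductions can occur and the correspondence is faithful in both directions.

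Finally I would conclude by induction on the length of the run $(\tmstart,\lmark x\blank^\infty,0) \yields{\ast} (\tmstop,y,n)$: iterating the single-step simulation mirrors each TM transition by a bounded block of net reductions, recovering the reduction $N \reducts[\ast] M$ of the lemma together with the stronger property that the intermediate nets encode the intermediate TM configurations in order; when $f(x_1,\dots,x_n)$ is undefined the same correspondence tracks the non-terminating run step by step. The only real obstacle lies in the single-step claim, and within it the boundary case at $\rmark$, where one must check that extending the tape and then performing the move reproduces exactly the next configuration while preserving the encoding invariant; everything else is bookkeeping layered on top of the already-granted lemma.
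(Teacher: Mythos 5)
Your proposal is correct and follows the same route as the paper, which simply states that the proposition is a direct consequence of the preceding lemma (itself proved ``by construction''). You merely make explicit the details the paper leaves implicit --- the encoding of multiple arguments on one tape, the single-step case analysis including the $\rmark$ boundary case, and the induction over the run --- so there is nothing to object to.
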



In the following we consider the more interesting question of whether any sequential computation in an interaction-net system can be simulated on a Turing machine.
Let $(\FS,\RS)$ denote a finite interaction-net system and let $N$ denote a net based on $\FS$. The number of cells of a net $N$ is denoted as $\netsize{N}$.
The next auxiliary lemma is straightforward.
\label{def:K}

Let $K \defsym \max \{ \netsize{R} \mid { L \reducts R} \in \RS\}$.

\begin{lemma}
\label{l:1}
Suppose $N \reducts M$ for some net $M$.
Then, $\netsize{M} \leqslant \netsize{N}+K$.
\end{lemma}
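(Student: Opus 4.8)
The plan is to establish the bound by a direct bookkeeping argument on the number of nodes affected by a single reduction step. First I would recall, from the description of interaction-net reduction in Section~\ref{s:interaction-nets}, that every redex consists of exactly two nodes connected through their principal ports, and that a single step $N \reducts M$ fires precisely one such redex in context. By definition this step deletes the two nodes of the fired active pair—matching the left-hand side $L$ of some rule $L \reducts R \in \RS$—and inserts in their place a fresh copy of the right-hand side $R$, while leaving every other node of $N$ and all the surrounding wiring untouched.

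From this local picture the counting is immediate. Passing from $N$ to $M$ removes exactly the two redex nodes and adds the $\netsize{R}$ nodes of $R$, so that $\netsize{M} = \netsize{N} - 2 + \netsize{R}$. Since $K$ is defined as the maximum of $\netsize{R}$ taken over all rules $L \reducts R \in \RS$, we have $\netsize{R} \leqslant K$, whence $\netsize{M} = \netsize{N} - 2 + \netsize{R} \leqslant \netsize{N} + K - 2 \leqslant \netsize{N} + K$. This yields the claimed inequality, in fact with two nodes to spare.

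There is no genuine obstacle here: the entire content is the observation that a redex contributes exactly two nodes to be removed, which is guaranteed by the design of interaction nets (redexes are formed by two principal ports and cannot overlap), and that the remainder of the net is inert during the step. The only hypothesis silently used is that $\RS$ is finite, which ensures that $K$ is well-defined as a maximum; this is already in force since we work with a finite interaction-net system.
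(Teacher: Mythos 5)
Your proof is correct and follows essentially the same route as the paper's: both compute $\netsize{M} = \netsize{N} - \netsize{L} + \netsize{R}$ for the fired rule $L \reducts R$ and bound $\netsize{R}$ by $K$. You merely make explicit that $\netsize{L} = 2$ for an interaction-net redex, which sharpens the bound slightly but changes nothing essential.
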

\begin{proof}

Suppose that $N \reducts M$ is due to the contraction of 
the rewrite rule $L \reducts R$. Then
the number of cells in $M$ is bounded as:
$\netsize{M} = \netsize{N} - \netsize{L} + \netsize{R} \leqslant \netsize{N} + K$.
\end{proof}

We provide an implementation of interaction-net reductions on TMs.
When we say \emph{computable in time} or \emph{space} below, we implicitly mean computable on a many-tape TM $\mathcal{T} = (Q,\Sigma,\delta,\tmstart,\tmstop)$ with dedicated input and output tape.
The precise number of tapes will become clear from the construction provided.

For each $\alpha \in \FS$, we assume the existence of a tape symbol $\code{\alpha}$  in the alphabet $\Sigma$ of the encoding TM $\mathcal{T}$.
We assume that all nodes, i.e. occurrences of symbols, are consecutively numbered from $1$ to $\netsize{N}$.
We assume further that the ports of $\alpha$ are labelled from $0$ to $n$, where
$n$ is the arity of $\alpha$ and the principal port is given label $0$.

\begin{definition}
The net $N$ is encoded as an adjacency list of entries representing nodes.
We store each node $a$ in binary. In addition we store the code $\code{\alpha}$ of the symbol $\alpha$, its arity, and the list of adjacent symbols, provided in increasing order of the port labels.
More precisely, the encoding of $N$ is given by a list containing for each $a \in N$ a quadruple
\begin{equation*}
  \langle \code{a}, \code{\alpha}, n, [\code{a_0},p_{a_0},\code{a_1},p_{a_1},\dotsc,\code{a_n},p_{a_n}] \rangle
  \tkom
\end{equation*}
where $\code{a} = (a)_2$, $\alpha \in \FS$, $n$ denotes the arity of $\alpha$,
and $p_{a_i}$ denotes the incoming port in cell $a_i$ ($i=0,\dots,n$). 
\end{definition}

Given a net $N$ with $k$ cells, its encoding can be stored in size $\bigO(\ceil{\log(k)} \cdot k)$:
each entry in the adjacency list has size $\bigO(\log(k))$ and the list has at most $k$ entries.
This motivates the following definition.
The \emph{representation size} of $N$, denoted as $\rsize{N}$, is defined as $\ceil{\log(k)} \cdot k$.

\begin{definition}
A reduction rule ${\activepair{\alpha}{\beta} \reducts R} \in \RS$ is encoded as the following triple:
\begin{equation*}
  \langle \code{\alpha}, \code{\beta}, \code{R} \rangle
  \tkom
\end{equation*}
where $\code{R}$ denotes the encoding of the right-hand side of the rule $R$ as
defined above. We employ fresh nodes $π_i$, $i∈[1,m+n]$ to encode the free ports of $R$, where $n$ ($m$) denotes the arity of~$\alpha$ ($\beta$).

\end{definition}

The next lemma shows how to compute a single-step reduction on a given
net $N$. Due to the sequential computational model of nets, the concrete contracted active pair $\activepair{\alpha}{\beta}$ can be chosen non-deterministically.
We emphasise that due to the diamond property of interaction-net systems, it does not matter which active pair is chosen first.

\begin{lemma}
\label{l:2}
Let $N \reducts M$ be a reduction.
Then the encoding of $M$ is computable in time $\bigO(\rsize{N}+K)$.
\end{lemma}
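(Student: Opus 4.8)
The plan is to exhibit a multi-tape Turing machine that, given the adjacency-list encoding of $N$, writes the encoding of $M$ after a constant number of left-to-right sweeps of the list, each of cost $\bigO(\rsize{N})$. The key quantitative observations are that the system $(\FS,\RS)$ is fixed and finite, so both $K$ and the maximal arity are constants: firing a redex deletes two entries and inserts at most $K$ fresh ones, every identifier involved stays within $[1,\netsize{N}+K]$ and hence occupies $\bigO(\log k)$ bits (writing $k=\netsize{N}$), and the rule table $\langle\code{\alpha},\code{\beta},\code{R}\rangle$ can be hard-coded in the finite control and consulted in $\bigO(1)$, reading off $\code{R}$ in $\bigO(K)$ steps.

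First I would locate a redex in a single sweep: for each entry $\langle\code{a},\code{\alpha},n,[\code{a_0},p_{a_0},\dots]\rangle$ we test whether its principal port (the label-$0$ slot) meets another node's principal port, i.e.\ whether $p_{a_0}=0$; the first such pair $\activepair{\alpha}{\beta}$ is the active pair to contract, which exists by hypothesis. While reading the two entries, I record on a work tape the $\bigO(1)$-size table of their external connections: for each auxiliary port of $\alpha$ and of $\beta$, the incident neighbour identifier together with its port. This table has $m+n$ rows, matching the port nodes $\pi_1,\dots,\pi_{m+n}$ of $R$, and occupies $\bigO(\log k)$ space. This phase costs one scan, $\bigO(\rsize{N})$.

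Then I would fetch $\code{R}$ from the rule table on $\activepair{\code{\alpha}}{\code{\beta}}$ and, since $R$ is of constant size, precompute in $\bigO(K)$ steps, for each index $i$, the target to which the neighbour recorded for port $i$ must be redirected --- namely the (freshly renamed) node of $R$ adjacent to $\pi_i$, together with its port. One further copy sweep then transcribes the list of $N$ to the output with the two redex entries deleted and, simultaneously, every external-neighbour entry patched according to this table; as the buffer is of constant size the patching folds into the same pass. Finally the $\le K$ nodes of $R$ are appended under fresh identifiers drawn from $[k+1,k+\netsize{R}]$, each reference to a port node $\pi_i$ being replaced by the recorded external connection for port $i$. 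Copying is $\bigO(\rsize{N})$ and writing $R$ is $\bigO(K\log k)$, so the total is $\bigO(\rsize{N}+K)$; by Lemma~\ref{l:1} the output indeed has $\netsize{M}\le k+K$ nodes.

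The main obstacle is getting the wire reconnection right within a constant number of sweeps, after which the asymptotics follow routinely. Two corner cases need attention: a pass-through wire of $R$, where a port node $\pi_i$ is adjacent to another port node $\pi_j$, must be resolved by connecting the two corresponding external neighbours directly to each other rather than to any node of $R$; and a wire joining an auxiliary port of $\alpha$ to an auxiliary port of $\beta$, which after contraction becomes an internal wire and must be handled by composing the two port substitutions. Because at most $m+n=\bigO(1)$ slots ever change, all identifiers they mention fit in an $\bigO(\log k)$-bit buffer, so each update is absorbed into the single copy sweep and the overall $\bigO(\rsize{N}+K)$ budget is preserved.
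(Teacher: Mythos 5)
Your proposal is correct and follows essentially the same route as the paper's proof: locate the active pair in one sweep, write the encoding of the rule's right-hand side $R$ onto a work tape, replace the port nodes $\pi_i$ by the recorded external connections of the two contracted nodes, and copy the remainder of the net, for a total of $\bigO(\rsize{N}+\rsize{R})$. You are in fact somewhat more explicit than the paper, which does not spell out the patching of the neighbours' own adjacency entries or the pass-through and internal-wire corner cases, but this only fills in detail rather than changing the argument.
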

\begin{proof}
First we need to locate the active pair $\activepair{\alpha}{\beta}$ in the encoding $\code{N}$ of the net $N$.
This requires at most $\rsize{N}$ steps for searching \emph{an} entry
\begin{equation*}
 \langle \code{a},\code{\alpha},n,[\code{b},p_{b},\code{b_1},p_{b_1},\dotsc,\code{b_n},p_{b_n}] \rangle
 \tkom
\end{equation*}
where the label of node $b$ is $\beta$.
We write the encoding of the reduction rule ${\activepair{\alpha}{\beta} \reducts R}$ on a worktape of the TM.
This operation is linear in $\rsize{R}$.
By definition of the encoding the free ports in $R$ are encoded as $π_i$, $i∈[1,m+n]$, where $m$ denotes the arity of $\beta$.
By construction the fresh nodes $π_i$, $i∈[1,n]$, correspond to the nodes $\code{a_i}$ in $N$, while the fresh nodes $π_{n+j}$, $j∈[1,m]$, correspond to the $\code{b_j}$.
Hence it suffices to replace the nodes $π_i$, $i∈[1,m+n]$ accordingly.
This replacement is performed directly on the encoding of $R$.
Searching takes at most time linear in $\rsize{N}$, replacing takes at most time linear in $\rsize{R}$.
Finally, we copy the remaining part of $N$ and the altered right-hand side $R$ to the output tape.
This concludes the computation of $M$.
In total, the computation takes time linear in $\rsize{N}+\rsize{R}$.
\end{proof}

\begin{lemma}
\label{l:3}
Suppose $N \reducts[\ell]_s M$ holds. Then $M$ is computable in
$\bigO((\ceil{\log(k+\ell)}) \cdot (k + \ell) \cdot \ell)$, where $k \defsym \netsize{N}$.
\end{lemma}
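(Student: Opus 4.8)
The plan is to simulate the given reduction $N \reducts[\ell]_s M$ step by step on the Turing machine, invoking Lemma~\ref{l:2} once for each of the $\ell$ single-step reductions and carefully controlling how the representation size of the intermediate nets grows along the way. Concretely, I would fix an arbitrary reduction sequence $N = N_0 \reducts N_1 \reducts \cdots \reducts N_\ell = M$ witnessing $N \reducts[\ell]_s M$; by the diamond property the particular choice of active pair contracted at each step is immaterial, so any such sequence may be simulated.

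The key quantitative input is Lemma~\ref{l:1}, which gives $\netsize{N_{i+1}} \leqslant \netsize{N_i} + K$ at every step, and hence by a trivial induction $\netsize{N_i} \leqslant k + i K \leqslant k + \ell K$ for all $i \leqslant \ell$. Since $K$ is a fixed constant of the interaction-net system, this is a \emph{uniform} bound $\netsize{N_i} = \bigO(k+\ell)$, and correspondingly $\rsize{N_i} = \ceil{\log \netsize{N_i}} \cdot \netsize{N_i} = \bigO(\ceil{\log(k+\ell)} \cdot (k+\ell))$ on the representation size of every intermediate net. Next I would account for the cost of one simulated step: by Lemma~\ref{l:2}, computing $N_{i+1}$ from the encoding of $N_i$ takes time $\bigO(\rsize{N_i} + K)$, and copying the result back so that the next step can proceed adds only $\bigO(\rsize{N_{i+1}})$, which is of the same order. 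Plugging in the uniform bound, each of the $\ell$ steps costs $\bigO(\ceil{\log(k+\ell)} \cdot (k+\ell))$, and summing over the $\ell$ steps yields the claimed total time $\bigO(\ceil{\log(k+\ell)} \cdot (k+\ell) \cdot \ell)$.

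The main conceptual point — and the reason the bound comes out cleanly — is that the growth guaranteed by Lemma~\ref{l:1} is \emph{additive} rather than multiplicative, so every intermediate representation stays within a single uniform $\bigO(\ceil{\log(k+\ell)}(k+\ell))$ bound and the naive per-step cost simply multiplies by $\ell$. The remaining points are bookkeeping rather than genuine obstacles: one must maintain a consistent consecutive numbering of nodes as fresh cells $π_i$ are introduced at each step, so that the adjacency-list encoding stays well-formed and the $\ceil{\log(\cdot)}$ bit-width remains accurate; and one must check that replacing $\netsize{N_i}$ by the upper bound $k + \ell K$ inside the logarithm is legitimate. The latter holds because $k + \ell K \leqslant K(k+\ell)$ for the constant $K \geqslant 1$, whence $\ceil{\log(k+\ell K)} = \bigO(\ceil{\log(k+\ell)})$, absorbing the constant $\log K$ into the $\bigO$.
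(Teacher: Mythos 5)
Your proposal is correct and follows essentially the same route as the paper's proof: bound every intermediate net's size via Lemma~\ref{l:1} by $k + \ell K = \bigO(k+\ell)$, apply Lemma~\ref{l:2} to get a per-step cost of $\bigO(\ceil{\log(k+\ell)}\cdot(k+\ell))$, and multiply by the $\ell$ steps. The extra bookkeeping you mention (renumbering fresh cells and absorbing the constant $K$ inside the logarithm) is left implicit in the paper but does not change the argument.
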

\begin{proof}
Consider the reduction sequence $N \reducts[\ell]_s M$.
We do not implement this sequence precisely, but allow permutations of applications of reduction rules.
Employing Lemma~\ref{l:2} the sequence $N \reducts[\ell]_s M$
becomes computable on a TM:
\begin{equation*}
  N \reducts N_1 \reducts \cdots \reducts_s N_\ell = M
  \tpkt
\end{equation*}
Let $k \defsym \netsize{N}$.
Due to Lemma~\ref{l:1}, for all $i∈[1,\ell]$, $\netsize{N_i} \leqslant \netsize{N} + i \cdot K$, where $K \in \N$.
Hence $\netsize{N_i} \in \bigO(k+\ell)$.
As $K$ only depends on the set of reduction rules $\RS$, Lemma~\ref{l:2}
yields that every single computation in this sequence requires time $\bigO(\rsize{N_i}) \subseteq \bigO(\ceil{\log(k+\ell)} \cdot (k+\ell))$.
As $\ell$ steps need to be performed, the lemma follows.
\end{proof}

\begin{corollary}
Let $(\FS,\RS)$ be a interaction-net system and let $N$ denote a net based on $\FS$ with a 
normal form $N'$, such that $N \reducts[\ell] N'$.
If we assume that $\netsize{N} \leqslant \ell$, then $N'$ is computable in time $\bigO(\log(\ell) \cdot \ell^2)$.
\end{corollary}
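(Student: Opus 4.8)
The plan is to read off the statement as a direct specialization of Lemma~\ref{l:3}, obtained by substituting the hypothesis $\netsize{N} \leqslant \ell$ into its complexity bound. First I would note that the reduction $N \reducts[\ell] N'$ to the (unique) normal form is a sequential reduction of length $\ell$, so Lemma~\ref{l:3} applies with $k \defsym \netsize{N}$ and certifies that $N'$ is computable in time $\bigO(\ceil{\log(k+\ell)} \cdot (k+\ell) \cdot \ell)$ on the many-tape TM of the construction. The remaining work is purely to collapse this expression under the assumption that $k$ is dominated by $\ell$.

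Next I would bound each of the three factors separately. Since $k \leqslant \ell$ we have $k + \ell \leqslant 2\ell$, hence $k + \ell = \bigO(\ell)$ for the middle factor, and $\ceil{\log(k+\ell)} \leqslant \ceil{\log(2\ell)} = \bigO(\log \ell)$ for the logarithmic factor (the constant $\log 2$ being absorbed). Substituting these two estimates into the bound supplied by Lemma~\ref{l:3} yields $\bigO(\log(\ell) \cdot \ell \cdot \ell) = \bigO(\log(\ell) \cdot \ell^2)$, which is exactly the claimed running time. No further structural argument about the net or the reduction is needed, as all of that has already been absorbed into Lemma~\ref{l:3}.

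I do not expect any genuine obstacle here, since the corollary is essentially an arithmetic simplification of an already-established bound; the only point deserving a word of justification is the identification of the anonymous reduction $\reducts[\ell]$ with the sequential reduction $\reducts[\ell]_s$ of Lemma~\ref{l:3}. This is legitimate because the diamond property of interaction-net systems guarantees that the length $\ell$ of a reduction to normal form is independent of the chosen strategy, so the sequential count and the count appearing in the statement coincide. It is worth emphasizing in passing that the surviving $\log \ell$ factor is precisely the logarithmic overhead needed to address nodes in the adjacency-list encoding, and is the reason one obtains polynomial-time (rather than constant-factor) invariance in this direction.
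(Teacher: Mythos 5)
Your proposal is correct and follows essentially the same route as the paper: both apply Lemma~\ref{l:3} with $k = \netsize{N} \leqslant \ell$ and simplify $\ceil{\log(k+\ell)}\cdot(k+\ell)\cdot\ell \leqslant \ceil{\log(2\ell)}\cdot 2\ell^2 = \bigO(\log(\ell)\cdot\ell^2)$. Your additional remark on identifying $\reducts[\ell]$ with the sequential reduction via the diamond property is a sensible clarification the paper leaves implicit.
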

\begin{proof}
Let $k \defsym \netsize{N}$ and assume $k \leqslant \ell$.
The corollary is direct from the last lemma, as we obtain:
\begin{align*}
  (\ceil{\log(k+\ell)}) \cdot (k + \ell) \cdot \ell
  & \leqslant (\ceil{\log(2\ell)} \cdot 2 \ell^2
  \\
  & \leqslant 2 \cdot \ceil{\log(\ell)} \cdot \ell^2
  \\
  & \in \bigO(\ceil{\log(\ell)} \cdot \ell^2)
    \tkom
\end{align*}
from which the corollary follows.
\end{proof}

\subparagraph{Discussion}
As already mentioned, the above encoding of nets requires a logarithmic
overhead in space usage, if we opt for a unitary cost model for space that
counts only the number of cells.
In this case, the space occupation of a net $N$ equals $\netsize{N}$, 
whereas the encoding of $N$ on a TM requires size 
$\bigO(\rsize{N}) = \bigO(\ceil{\log(\netsize{N})} \cdot \netsize{N})$.
Alternatively, we propose to measure the space usage of a net according to the size of its representation.
This is sensible from a practical point of view and corresponds to the size measure for random access machines (RAMs for short) proposed in the literature~\cite{SlotB88}.

Last, we conjecture that the sequential interaction-net cost model corresponds more precisely (with linear overhead in time and space) to a RAM model with unbounded indirection and that the parallel interaction-net cost model corresponds to a PRAM model.
We will clarify these points in future work.

\fi

\section{Sized Types and Semantic Complexity}
\label{s:sized-types}

The usual notion of typing provides information concerning the expected shape of inputs and outputs.
In order to control the size of natural numbers we introduce a type $\tnat{n}$ for natural numbers whose value is bounded by $n$, thanks to the following typing schemes:
\display{
\neednet{asy/a583113f}
  \and
\neednet{asy/011ceb01}
}

Given that any object of type $\tnat{n}$ can also be given type $\tnat{m}$ if $n ≤ m$, we say that $\tnat{n}$ is a subtype of $\tnat{m}$, written $\tnat{n} ⊴ \tnat{m}$.
The conversion from $\tnat{n}$ to $\tnat{m}$ for $n ≤ m$ can be performed
using an explicit type-conversion node:
\display{
\neednet{asy/f2e57296}
}

We will consider its reduction rules instantaneous (i.e. they are attributed time duration $0$) because type conversions are not required for actual computation:
\display{
  \op{\mb{\reducts[0]}}{
\neednet{asy/f0832a7f}
    \and
\neednet{asy/5dcc3df9}
  }
  \and
  \op{\mb{\reducts[0]}}{
\neednet{asy/5cf65969}
    \and
\neednet{asy/7a0ea146}
  }
}

To control the size of lists we can similarly introduce a type $\tlist{n}{A}$ for lists whose length is bounded by $n$, thanks to the following typing schemes:
\display{
\neednet{asy/b18ef57a}
  \and
\neednet{asy/d6c96f3a}
}
We easily obtain $\tlist{n}{A} ⊴ \tlist{m}{A}$ if $n ≤ m$ using similar conversion rules.
More generally, $\tlist{n}{A} ⊴ \tlist{m}{B}$ if $n ≤ m$ and $A ⊴ B$.
The depth or the number of nodes or various particular size measures of other tree-like data structures could be tracked in the same fashion.
More elaborate data structures, e.g., difference lists from \cite{interaction-nets--y-lafont}, can be handled as well.

Subtyping is essential and used to convert a strong type constraint to a weaker constraint.
Whenever $A ⊴ B$, one is allowed to use an object of type $A$ where an object of type $B$ is expected, using type-conversion nodes of the following shape, which we allow to appear in reduction rules:
\display{
\neednet{asy/d40e0f46}
}

As shown below, the following operations on natural numbers and lists can be assigned the following typing schemes:
\display{
\neednet{asy/e97f668a}
  \and
\neednet{asy/a6cd49ba}
  \also
\neednet{asy/b5ab9a98}
}

Addition reductions satisfy typing requirements as follows (we always use the most generic typing for left-hand sides in order to handle all possible valid interactions):
\display{
  \opreducts{
\neednet{asy/2e29fbf1}
    \and
\neednet{asy/40361d6d}
  }
  \and
  \opreducts{
\neednet{asy/488bb0d9}
    \and
\neednet{asy/45441c06}
  }
}


\ifextended
Addition and multiplication can be defined computationally in many different ways.
We have only considered one implementation of addition, which can be written as follows in a classical functional programming syntax:
\display{
  \tb{\boxcode{
      add zero y = y\\
      add (succ x) y = succ (add x y)
  }}
}

Multiplication can also be defined in many different ways.
We will consider the following three, which depend on a previously chosen implementation of addition:
\display{
  \tb{1. \boxcode{
      mul zero y = y\\
      mul (succ x) y = add y (mul x y)
  }}
  \and
  \tb{2. \boxcode{
      mul zero y = y\\
      mul (succ x) y = add (mul x y) y
  }}
  \and
  \tb{3. \boxcode{
      mul zero y = zero\\
      mul x zero = zero\\
      mul (succ x) (succ y) = succ (add y (mul x (succ y)))
  }}
}
\fi

Specialized versions of the usual interaction-net combinators $δ$ and $ε$ (see \cite{interaction-combinators--y-lafont}, they are also called “Dupl” and “Erase” in \cite{interaction-nets--y-lafont}) are used to copy or delete natural numbers.
Their reductions are the following:
\display{
  \opreducts{
\neednet{asy/2d962ce1}
    \and
    \mb{\netempty}
  }
  \and
  \opreducts{
\neednet{asy/583946f6}
    \and
\neednet{asy/1df4b2f5}
  }
  \also
  \opreducts{
\neednet{asy/b136ca85}
    \ifextended\else\hskip -8pt\fi
    \and
    \ifextended\else\hskip -2pt\fi
\neednet{asy/f17b394a}
    \ifextended\else\hskip -6pt\fi
  }
  \and
  \opreducts{
    \ifextended\else\hskip -6pt\fi
\neednet{asy/257e413f}
    \ifextended\else\hskip -22pt\fi
    \and
    \ifextended\else\hskip -2pt\fi
\neednet{asy/ed3d859e}
  }
}

The most standard multiplication (variant 1) is defined below with interaction nets.
The adequate typing of its reduction rules validates the size complexity property expressed in its typing scheme.
\display{
  \opreducts{
\neednet{asy/acfc92f8}
    \and
\neednet{asy/43286ed9}
  }
  \and
  \opreducts{
\neednet{asy/2b1130ea}
    \and
\neednet{asy/f31328bb}
  }
}

Multiplication (variant 2) is the one presented in \cite{interaction-nets--y-lafont}.
It has different computational properties (it is generally less efficient), but it can be typed similarly:
\display{
  \opreducts{
\neednet{asy/acfc92f8}
    \and
\neednet{asy/99ca2198}
  }
  \and
  \opreducts{
\neednet{asy/2b1130ea}
    \and
\neednet{asy/f31328bb}
  }
}

\ifextended
Multiplication (variant 3) is, as we will show later, a better fit for parallel computation and can be defined using an auxiliary node as follows:
\display{
  \opreducts{
\neednet{asy/acfc92f8}
    \ifextended\else\hskip -15pt\fi
    \and
\neednet{asy/a6556a41}
  }
  \and
  \opreducts{
\neednet{asy/1812b675}
    \ifextended\else\hskip -15pt\fi
    \and
\neednet{asy/d8f33eda}
  }
  \also
  \opreducts{
\neednet{asy/9b590c46}
    \ifextended\else\hskip -30pt\fi
    \and
\neednet{asy/4980894b}
  }
  \and
  \opreducts{
\neednet{asy/0af214ea}
    \ifextended\else\hskip -15pt\fi
    \and
\neednet{asy/1fa8d3cf}
  }
}
\fi

\ifextended\else
Addition and multiplication can be defined computationally in many different ways.
\fi
All implementations possess their own particular computational complexity properties (each could be more efficient in a given context), but all additions (respectively, all multiplications) are semantically equivalent and their outputs share the same size bound property, as expressed by their common typing scheme.

Size bounds for list concatenation are obtained as follows:
\display{
  \opreducts{
\neednet{asy/74b6c70e}
    \ifextended\else\hskip -15pt\fi
    \and
\neednet{asy/33ce669c}
  }
  \and
  \opreducts{
\neednet{asy/13f9dcb5}
    \ifextended\else\hskip -15pt\fi
    \and
\neednet{asy/e340b155}
  }
}

\section{Sequential Computational Complexity Analysis}
\label{s:sequential-analysis}

We will prove a space–time complexity theorem for the sequential reduction, which will then be turned into separate space complexity and time complexity theorems.
However, we provide first a simple example explaining how our analysis allows one to infer the time complexity of multiplication (variant 2) from those of the operations used in its definition.
\begin{example}
Assuming knowledge of time potentials $τ$ (which, as will be explained soon, correspond closely to time complexities) for the following typed nodes,
\display{
  \cscheme{\sc{τ}(n) = 0}{
\neednet{asy/99f546d2}
  }
  \and
  \cscheme{\sc{τ}(n) = 0}{
\neednet{asy/72f7ebe2}
  }
  \also
  \cscheme{\sc{τ}(n, m) = n + 1}{
\neednet{asy/2f659fec}
  }
  \also
  \cscheme{\sc{τ}(n) = n + 1}{
\neednet{asy/221dc934}
  }
  \and
   \cscheme{\sc{τ}(n) = n + 1}{
\neednet{asy/cbc9d24e}
  }
}
\noindent our results ensure that multiplication (variant 2) as defined previously can be attributed potential (and complexity):
\display{
  \cscheme{\sc{τ}_{\ntext{mul}}(n, m)=\dfrac{n^2 + n}{2}m + 3n + m + 2}{
\neednet{asy/e2e663f6}
  }
}

The only requirement is to check that, in all reduction rules, the sum of the time potentials assigned to typed nodes in the left-hand side is strictly greater than the sum of the time potentials assigned to typed nodes in the right-hand side.
In this example, the constraints associated to multiplication reduction rules, $\sc{τ}_{\ntext{mul}}(n, m) + 0 > (m+1) + 0$ and $\sc{τ}_{\ntext{mul}}(n+1, m) + 0 > (nm+1) + (m+1) + \sc{τ}_{\ntext{mul}}(n, m)$, can be checked or solved easily to obtain the result.
\end{example}

We introduce notations which we will use later to represent concrete space–time complexities. Given a function $f : T \rightarrow S$, where $T$ denotes a time domain and $S$ a space domain, we define delayed functions $[f]^d$ as $[f]^d(t) = f(t-d)$, or $[f]^d(t) = 0$ when $t < d$.
Iterated sequences are defined recursively as $f ⊐^d_0 g = f$ and $f ⊐^d_{n+1} g = (f ⊐^d_n g) + [g]^{(n+1)d}$.
Intuitively $f ⊐^d_n g$ represents the superimposition of $n$ copies of $g$, successively delayed by $d$, to $f$.
Finally, compound iterated sequences recursively extend this definition with successive superimpositions as
$f ⊐^d_n g ⊐^{d_1}_{n_1} g_1 ⊐^{d_2}_{n_2} \dotsb ⊐^{d_p}_{n_p} g_p = (f ⊐^d_n g) ⊐^{d_1}_{n_1} [g_1]^{nd} ⊐^{d_2}_{n_2} \dotsb ⊐^{d_p}_{n_p} [g_p]^{nd}$.
The exponent or index of any $⊐$-symbol defaults to $1$ if omitted.
We abusively use numerals to denote constant functions.

For example, here is a graphical representation of $γ = 1 ⊐_3 1 ⊐_4 0 ⊐_2 -2$.
It initially has value $1$ and successively undergoes (after unitary delays) $3$ increments of $1$, then remains constant $4$ times, and undergoes $2$ decrements of $2$.
\display{
\neednet{asy/4150da65}
}


We define the \emph{sequential convolution} of functions $f$ and $g$ as follows:
\display{
  \mb{\abracket{f*g}(t) = \max_{u+v=t} f(u) + g(v)}
}

This operation is commutative and associative.
The generalized sequential convolution of a family of functions $\set{f_i}_{i∈I}$ can be expressed as:
\display{
  \mb{
    \abracket{\coprod_{i∈I} f_i}(t) = \max_{\sum_{i∈I} t_i = t} ~ \sum_{i∈I} f_i(t_i)
  }
}

We will range over all typed nodes of a net $N$ by indexing an operation (e.g. a sum or a sequential convolution) with $c ∈ N$.
Nodes occurring multiple times in $N$ are counted over multiple times.
In particular, assuming that every node $c$ has been assigned a space-occupation weight $|c| ≥ 0$ (weights can be chosen arbitrarily), we define the space occupation of a net as the sum of its nodes' weights: $|N| = \sum_{c∈N} |c|$.

\begin{theorem}[Sequential Space–Time Complexity]
\label{t:sequential}
Associate a function $\sc{γ}_c : T \rightarrow S$, called space–time potential, to every typed node $c$, such that $\sc{γ}_c(0) ≥ |c|$ and $\abracket{\coprod_{c∈L} \sc{γ}_c}(t+d) ≥ \abracket{\coprod_{c∈R} \sc{γ}_c}(t)$ for every reduction rule $L \reducts[d] R$.
Whenever $N \reducts[t]_s M$: $$|M| ≤ \abracket{\coprod_{c∈N} \sc{γ}_c}(t)$$
\end{theorem}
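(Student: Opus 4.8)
The plan is to abbreviate the net potential as $\Phi_N \defsym \coprod_{c∈N} \sc{γ}_c$, so that the asserted bound reads $|M| ≤ \Phi_N(t)$, and to reduce everything to two elementary observations. First, since the time domain is nonnegative, the only way to write $0 = \sum_{c∈M} t_c$ is with every summand $0$; hence $\Phi_M(0) = \sum_{c∈M}\sc{γ}_c(0) ≥ \sum_{c∈M}|c| = |M|$ by the hypothesis $\sc{γ}_c(0) ≥ |c|$. So it suffices to bound $\Phi_M(0)$ by $\Phi_N(t)$. Second, because the convolution $*$ and its generalization $\coprod$ are commutative and associative, the potential factorizes along any partition of the nodes: writing the redex $L$ and its surrounding context $C$ as a partition of the nodes of $N$, we have $\Phi_N = \Phi_L * \Phi_C$, and after firing a rule $L \reducts[d] R$ we get $\Phi_M = \Phi_R * \Phi_C$ since the context $C$ is untouched.

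The key move is to prove a shift-invariant strengthening that is stable under the transitivity defining $\reducts_s$: I will show that whenever $N \reducts[t]_s M$, then $\Phi_M(s) ≤ \Phi_N(s+t)$ for all $s ≥ 0$. The theorem then follows by taking $s = 0$ and using $|M| ≤ \Phi_M(0)$. The proof is by induction on the number of single steps composed. The inductive step is where the strengthening pays off: if $N \reducts[t_1]_s P \reducts[t_2]_s M$ with $t = t_1 + t_2$, then $\Phi_M(s) ≤ \Phi_P(s+t_2) ≤ \Phi_N(s + t_2 + t_1) = \Phi_N(s+t)$, while the empty reduction $t = 0$, $M = N$ is immediate.

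For the base case, a single step $N \reducts[d] M$ via $L \reducts[d] R$, the factorization reduces the goal to $(\Phi_R * \Phi_C)(s) ≤ (\Phi_L * \Phi_C)(s+d)$. The rule hypothesis is exactly the pointwise bound $\Phi_R(u) ≤ \Phi_L(u+d)$, so
\[
  (\Phi_R * \Phi_C)(s) = \max_{u+v=s} \Phi_R(u) + \Phi_C(v) ≤ \max_{u+v=s} \Phi_L(u+d) + \Phi_C(v).
\]
Re-indexing by $u' = u + d$ rewrites the last maximum as one over pairs with $u' + v = s + d$ subject to $u' ≥ d$; dropping that constraint only enlarges the admissible set, so the expression is at most $\max_{u'+v = s+d} \Phi_L(u') + \Phi_C(v) = \Phi_N(s+d)$. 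This shift-and-relax estimate is the technical heart of the argument.

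I expect the main obstacle to be conceptual rather than computational: identifying the correct induction hypothesis — the shift-invariant $\Phi_M(s) ≤ \Phi_N(s+t)$ rather than the bare conclusion — so that single-step estimates chain through transitivity, and checking that the factorization is legitimate at the level of type instances. Concretely, one must confirm that the potentials attached to the redex nodes occurring in $N$ coincide with those of the instance of $L$ to which the rule hypothesis applies; this rests on the typing discipline and subject-reduction property recalled earlier. Once the invariant is fixed, the remaining $\max$-manipulations are routine.
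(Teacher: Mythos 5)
Your proposal is correct and follows essentially the same route as the paper: the single-step estimate obtained by factoring the potential into redex and context, applying the rule hypothesis pointwise, and shifting/relaxing the index set of the maximum is precisely the paper's chain $(γ*\sc{γ}_L)(t+d) ≥ \max_{u+v=t}(γ(u)+\sc{γ}_L(v+d)) ≥ (γ*\sc{γ}_R)(t)$, and the chaining through transitivity is what the paper compresses into ``by combining these steps.'' Your only addition is to make the shift-invariant induction hypothesis $\Phi_M(s) ≤ \Phi_N(s+t)$ explicit, which the paper leaves implicit.
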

\begin{proof}
The potential-decrease property assumed for reduction rules entails the same property for individual sequential reduction steps of a net: $\sc{γ}_L(t+d) ≥ \sc{γ}_R(t) \implies (γ*\sc{γ}_L)(t+d) = \max_{u+v=t+d} (γ(u) + \sc{γ}_L(v)) ≥ \max_{u+v=t} (γ(u) + \sc{γ}_L(v+d)) ≥ \max_{u+v=t} (γ(u) + \sc{γ}_R(v)) = (γ*\sc{γ}_R)(t)$.
By combining these steps, we obtain $\abracket{\coprod_{c∈N} \sc{γ}_c}(t) ≥ \dotsb ≥ \abracket{\coprod_{c∈M} \sc{γ}_c}(0) = \sum_{c∈M} \sc{γ}_c(0) ≥ |M|$.
\end{proof}

\begin{corollary}[Sequential Time Complexity]
\label{c:sequential-time}
Associate $\sc{τ}_c ≥ 0$, called time potential, to every typed node $c$, such that $\sum_{c∈L} \sc{τ}_c ≥ \sum_{c∈R} \sc{τ}_c + d$ for every reduction rule $L \reducts[d] R$.
Whenever $N \reducts[t]_s M$: $$t ≤ \sum_{c∈N} \sc{τ}_c$$
\end{corollary}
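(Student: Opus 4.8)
The plan is to derive this as a direct specialization of Theorem~\ref{t:sequential}, by choosing the space--time potentials $\sc{γ}_c$ to be suitable affine functions of time that encode the purely temporal bound. Concretely, I would set, for each typed node $c$, the space--time potential to be the constant-slope function $\sc{γ}_c(t) = \sc{τ}_c - t$ (interpreted as $-\infty$, or simply truncated so the convolution still makes sense, once $t > \sc{τ}_c$). The idea is that a node carrying time potential $\sc{τ}_c$ is thought of as a "clock" that can absorb exactly $\sc{τ}_c$ units of reduction time; the convolution $\coprod_{c∈N}$ then distributes a total duration $t$ across the nodes and the maximum is finite precisely when $t$ does not exceed the total budget $\sum_{c∈N}\sc{τ}_c$.

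First I would verify that this choice satisfies the hypotheses of Theorem~\ref{t:sequential}. The initial condition $\sc{γ}_c(0) ≥ |c|$ is vacuous here since we are not tracking space, so I would take all weights $|c| = 0$, making $|M| = 0$ trivially and the space--time conclusion uninformative on its own --- which is why the temporal content must be extracted from the \emph{finiteness/domain} of the convolution rather than from its value. The rule condition $\abracket{\coprod_{c∈L}\sc{γ}_c}(t+d) ≥ \abracket{\coprod_{c∈R}\sc{γ}_c}(t)$ reduces, for affine potentials of unit negative slope, to $\sum_{c∈L}\sc{τ}_c - (t+d) ≥ \sum_{c∈R}\sc{τ}_c - t$, i.e.\ $\sum_{c∈L}\sc{τ}_c ≥ \sum_{c∈R}\sc{τ}_c + d$, which is exactly the hypothesis of the corollary. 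So the rule-wise constraint transfers verbatim.

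Rather than fight the bookkeeping of truncated affine functions, the cleaner route --- which I would actually carry out --- is to \emph{re-run the proof of Theorem~\ref{t:sequential} at the level of the time budget directly}. Define the total budget $B(N) = \sum_{c∈N}\sc{τ}_c$. A single reduction step $N \reducts[d]_s N'$ replaces an instance of $L$ by $R$, and by the hypothesis $\sum_{c∈L}\sc{τ}_c ≥ \sum_{c∈R}\sc{τ}_c + d$ we get $B(N) ≥ B(N') + d$, since the nodes outside the redex are unchanged. Iterating along a reduction $N \reducts[t]_s M$ with $t = \sum_i d_i$ and summing the per-step decrements yields $B(N) ≥ B(M) + t ≥ t$, using $B(M) ≥ 0$ because every $\sc{τ}_c ≥ 0$. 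This gives $t ≤ \sum_{c∈N}\sc{τ}_c$ as required.

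The main obstacle is purely presentational: making the reduction-from-Theorem~\ref{t:sequential} argument rigorous would force me to handle the partial/extended-real values of the truncated potentials inside the $\max$-plus convolution and to argue carefully that a \emph{finite} convolution value forces $t ≤ B(N)$. The direct budget-decrease argument sidesteps this entirely, and I expect it to be only a few lines; the diamond property already guarantees (per the text) that all reductions to a given point have the same duration, so no subtlety about \emph{which} redex is fired arises.
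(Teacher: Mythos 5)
Your proposal is correct, but you end up taking a different (more elementary) route than the paper, and the reason you give for doing so is based on a difficulty that is not actually there. The paper's proof is exactly the specialization you first describe: set $\gamma_c(t) = \tau_c - t$ with null space-occupation weights, and observe that $\abracket{\coprod_{c\in N}\gamma_c}(t) = \sum_{c\in N}\tau_c - t$; the conclusion $t \le \sum_{c\in N}\tau_c$ then falls out of the \emph{value} of Theorem~\ref{t:sequential}'s bound (namely $0 = |M| \le \sum_{c\in N}\tau_c - t$), not from any finiteness or domain consideration. No truncation or extended-real bookkeeping is needed, because for unit-negative-slope affine functions the quantity $f(u)+g(v) = \tau_f+\tau_g-(u+v)$ is independent of how $t$ is split as $u+v$, so the sequential convolution of such functions is again exactly affine with summed intercepts. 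Your fallback --- defining the budget $B(N)=\sum_{c\in N}\tau_c$, showing each step of duration $d$ decreases it by at least $d$, and iterating to get $B(N) \ge B(M)+t \ge t$ --- is a perfectly sound self-contained argument; it is essentially the proof of Theorem~\ref{t:sequential} unwound in this special case, so it buys independence from the theorem at the cost of duplicating its induction, whereas the paper's two-line reduction reuses the theorem and keeps the corollary genuinely a corollary.
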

\begin{proof}
Choose $\sc{γ}_c(t) = \sc{τ}_c - t$ and null space-occupation weights; remark that $\abracket{\coprod_{c∈N} \sc{γ}_c}(t) = \sum_{c∈N} \sc{τ}_c - t$.
\end{proof}

\begin{corollary}[Sequential Space Complexity]
\label{c:sequential-space}
Associate $\sc{σ}_c ≥ |c|$, called space potential, to every typed node $c$, such that $\sum_{c∈L} \sc{σ}_c ≥ \sum_{c∈R} \sc{σ}_c$ for every reduction rule $L \reducts[d] R$.
Whenever $N \reducts[t]_s M$: $$|M| ≤ \sum_{c∈N} \sc{σ}_c$$
\end{corollary}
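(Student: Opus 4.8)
The plan is to derive this as a direct instantiation of Theorem~\ref{t:sequential}, mirroring the strategy used for the time corollary, but choosing the space–time potentials to be constant in time rather than linearly decreasing. Concretely, I would set $\sc{γ}_c(t) = \sc{σ}_c$ for every $t ∈ T$, so that each typed node carries a space–time potential that never varies with time.

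The first step is to verify the two hypotheses of Theorem~\ref{t:sequential} under this choice. The initial condition $\sc{γ}_c(0) ≥ |c|$ is immediate, since $\sc{γ}_c(0) = \sc{σ}_c ≥ |c|$ by assumption. For the rule-decrease condition, the key observation is that the generalized sequential convolution of time-constant functions collapses to an ordinary sum: for any net $P$ one has $\abracket{\coprod_{c∈P} \sc{γ}_c}(t) = \max_{\sum_{c∈P} t_c = t} \sum_{c∈P} \sc{σ}_c = \sum_{c∈P} \sc{σ}_c$, independently of $t$, because each summand $\sc{σ}_c = \sc{γ}_c(t_c)$ is insensitive to the time $t_c$ allocated to node $c$ by the decomposition. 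Consequently the inequality $\abracket{\coprod_{c∈L}\sc{γ}_c}(t+d) ≥ \abracket{\coprod_{c∈R}\sc{γ}_c}(t)$ required by the theorem reduces precisely to the assumed hypothesis $\sum_{c∈L}\sc{σ}_c ≥ \sum_{c∈R}\sc{σ}_c$, for every reduction rule $L \reducts[d] R$.

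With both hypotheses verified, applying Theorem~\ref{t:sequential} to a reduction $N \reducts[t]_s M$ yields $|M| ≤ \abracket{\coprod_{c∈N}\sc{γ}_c}(t)$, and by the same collapse of the convolution the right-hand side equals $\sum_{c∈N}\sc{σ}_c$, which is exactly the claimed bound. I do not anticipate any genuine obstacle here; the statement is a corollary precisely because the space–time machinery specializes so cleanly. The only point requiring attention is the remark that a generalized sequential convolution of time-constant potentials is simply their sum, after which the result follows with no further computation.
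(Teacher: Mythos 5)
Your proposal is correct and coincides with the paper's own proof: it chooses the constant potentials $\sc{γ}_c(t) = \sc{σ}_c$ and observes that the generalized sequential convolution of time-constant functions collapses to the plain sum $\sum_{c∈N}\sc{σ}_c$, from which Theorem~\ref{t:sequential} gives the bound directly. Your write-up merely spells out the verification of the hypotheses in more detail than the paper does.
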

\begin{proof}
Choose constant functions $\sc{γ}_c(t) = \sc{σ}_c$; remark that $\abracket{\coprod_{c∈N} \sc{γ}_c}(t) = \sum_{c∈N} \sc{σ}_c$.
\end{proof}

Subtyping conversion rules such as those we have defined satisfy the sequential potential-decrease property; because they have been assigned durations $0$, it is sufficient to ensure that the potential assigned to constructors is monotonic with respect to subtyping.
For lack of precise knowledge about the hardware that may host the computation, we choose to assign a unitary space-occupation weight to all nodes.
For our present needs, constructors will be assigned constant witness potentials, but they could be assigned different potentials if we were to perform amortized cost analysis in full.
The following potentials are associated to the displayed typing schemes and parameterized in their size variables.
They are compatible with all reduction rules and therefore provide sequential time ($τ$), space ($σ$) and space–time ($γ$) complexity measures:
\display{
  \cscheme{\sc{τ} = 0 \cr \sc{σ} = 1 \cr \sc{γ} = 1}{
\neednet{asy/99f546d2}
  }
  \and
  \cscheme{\sc{τ} = 0 \cr \sc{σ} = 1 \cr \sc{γ} = 1}{
\neednet{asy/72f7ebe2}
  }
  \also
  \cscheme{\sc{τ} = n + 1 \cr \sc{σ} = 1 \cr \sc{γ} = 1 ⊐_n 0 ⊐ -2}{
\neednet{asy/2f659fec}
  }
  \and
  \cscheme{\sc{τ} = 2nm + 2n + m + 2 \cr \sc{σ} = nm + n + 1 \cr
    \sc{γ} = 1 ⊐_{nm+n} 1 ⊐_{nm+1} 0 ⊐_m -1 ⊐_{n+2} -2}{
    \text{(var. 1) \hskip -15pt}
\neednet{asy/e2e663f6}
  }
  \also
  \cscheme{\sc{τ} = n + 1 \cr \sc{σ} = n + 1 \cr \sc{γ} = 1 ⊐_n 1 ⊐ 0}{
\neednet{asy/221dc934}
  }
  \and
   \cscheme{\sc{τ} = n + 1 \cr \sc{σ} = 1 \cr \sc{γ} = 1 ⊐_n -1 ⊐ -2}{
\neednet{asy/cbc9d24e}
  }
}

\subparagraph{Example of a concrete application to program complexity certification}
If we consider the net built by providing interaction-net representations of two natural numbers bounded respectively by $n$ and $m$ as arguments to a $\ntext{mul}$ node, the combined time and space potentials of all the nodes are bounded respectively by $τ = 2nm + 2n + m + 2$ and $σ = nm + 2n + m + 3$ (constructors in the representation of natural number $i$ possess themselves a combined potential of $τ = 0$ and $σ = i + 1$).
According to Corollaries~\ref{c:sequential-time} and~\ref{c:sequential-space}, these combined potentials constitute concrete bounds on the sequential time and space needed to perform the multiplication as a function of the size of its arguments.
Furthermore, the bounds are precise.

\section{Scheduled Types and Productivity}
\label{s:scheduled-types}

For the care of fully parallel reductions, we distinguish several categories of natural numbers depending on the pace at which they are computed.
A natural number admits type $\tnatp{d}$ if its first constructor is available now and one additional constructor will be made available after every parallel reduction of duration $d ≥ 0$ (or faster).

This corresponds to the following typing scheme definitions, in which we use a bracket notation to denote \emph{delayed types}: any object of type $[A]^t$ will become an $A$ after a parallel reduction of duration $t ≥ 0$ (defaulting to $1$ if omitted).
We assume syntactic equalities $A = [A]^0$ and $[[A]^{t_1}]^{t_2} = [A]^{t_1+t_2}$.
\display{
\neednet{asy/c760ef0e}
  \and
\neednet{asy/afd4cd31}
}

Similarly we can define a type $\tlistp{d}{A}$ for linearly produced lists with pace $d$ as follows:
\display{
\neednet{asy/61e0bb5b}
  \and
\neednet{asy/0b897756}
}

Restricting ourselves to unitary time reductions, we will show that implementations of addition, multiplication \ifextended (variant 3)\fi and list concatenation exist with the following interfaces:
\display{
  \mb{∀d ≥ 1
\neednet{asy/74994e08}
  }
  \and
  \mb{∀d ≥ 2
\neednet{asy/c16994b8}
  }
  \also
  \mb{∀d ≥ 1
\neednet{asy/313bee8d}
  }
}

It is assumed that:
\begin{itemize}
\item In typing schemes and left-hand sides of reduction rules, principal ports are assigned undelayed types (i.e. the root is a type variable or a base type).
\item The initial delays present in the outputs of a reduction rule are reduced by the rule's duration upon firing.
Input delays may but need not be reduced by the full amount of the reduction rule's duration.
\display{
  \op{\mb{\reducts[d]}}{
\neednet{asy/5e1c96fc}
    \and
\neednet{asy/70c764a1}
  }
}
\end{itemize}

In particular, it is always assumed that top-level output delays in the interface of a redex are greater than the assigned duration of the corresponding reduction rule.

We allow two typing conveniences for scheduled types:
\begin{itemize}
\item Subtyping: Assuming $A$ is the type associated to a tree-like data structure (i.e. constructors only have inputs, which includes $\mathsf{nat}$ and $\mathsf{list}$ types, but excludes for example difference lists as in \cite{interaction-nets--y-lafont} or abstractions), we have $A ⊴ [A]^t$.
One can safely use an object of type $A$ where an object of type $[A]^t$ is expected.
In particular, we deduce $\tnatp{d_1} ⊴ \tnatp{d_2}$ if $d_1 ≤ d_2$, and $\tlistp{d_1}{A} ⊴ \tlistp{d_2}{B}$ if $d_1 ≤ d_2$ and $A ⊴ B$.

\item Delayed computation: The execution of a net can be delayed by delaying all the types in its interface (independently of their orientations).
For a single node, this allows to instantiate any typing scheme $c$ as $[c]^d$:
\display{
  \mb{c =
\neednet{asy/37dbb48c}
  }
  \and
  \mb{[c]^d =
\neednet{asy/fe83ab6b}
  }
}
\end{itemize}


Within those conditions, the above property assumed for reduction rules can be generalized to the fully parallel reduction of nets.
If we assume that inputs are available within expected schedules, the parallel reduction of a net will produce outputs in accordance with the schedules that correspond to their types.
For example, an output wire typed $\tnatp{d}$ will output natural number constructors with pace $d$.

Reduction rules for addition of natural numbers are as follows ($d ≥ 1$):
\display{
  \opreducts{
\neednet{asy/dd3cbc91}
    \and
\neednet{asy/e141ea1c}
  }
  \and
  \opreducts{
\neednet{asy/733fa21d}
    \and
\neednet{asy/f36f8561}
  }
}

The typing and reduction rules of the specialized versions of $δ$ and $ε$ to natural numbers are straightforward ($d ≥ 1$):
\display{
  \opreducts{
\neednet{asy/02f57303}
    \and
    \mb{\netempty}
  }
  \and
  \opreducts{
\neednet{asy/ebfbac58}
    \and
\neednet{asy/580d6aac}
  }
  \also
  \opreducts{
\neednet{asy/da9e1deb}
    \ifextended\else\hskip -10pt\fi
    \and
    \ifextended\else\hskip -2pt\fi
\neednet{asy/4b460fa6}
    \ifextended\else\hskip -10pt\fi
  }
  \and
  \opreducts{
    \ifextended\else\hskip -10pt\fi
\neednet{asy/559b0db5}
    \ifextended\else\hskip -24pt\fi
    \and
    \ifextended\else\hskip -2pt\fi
\neednet{asy/f24defcf}
  }
}

For multiplication (variant 1) of natural numbers, the timing analysis has to rely on the size analysis.
The output is produced after a significant delay that depends (linearly) on the size of the first parameter.
This example illustrates the combined use of size and timing annotations which is generally required to obtain precise complexity bounds.
For $d ≥ 1$, we obtain:
\display{
\neednet{asy/a8929ae3}
}

Reduction rules are as follows:
\display{
  \opreducts{
\neednet{asy/243ed145}
    \ifextended\else\hskip -30pt\fi
    \and
\neednet{asy/70eecdc0}
  }
  \and
  \opreducts{
\neednet{asy/349eb454}
    \ifextended\else\hskip -25pt\fi
    \and
\neednet{asy/f2ecbac4}
  }
}

Multiplication (variant 3) of natural numbers is defined using an auxiliary node and constitutes in most cases an interesting improvement in a parallel model as it offers a constant delay.
For $d ≥ 2$, we obtain:
\display{
\neednet{asy/c16994b8}
  \and
\neednet{asy/4635b053}
}

Reduction rules:
\display{
  \opreducts{
\neednet{asy/f0ccc19f}
    \and
\neednet{asy/7002eca5}
  }
  \and
  \opreducts{
\neednet{asy/b434dc1d}
    \and
\neednet{asy/0fb632a0}
  }
  \also
  \opreducts{
\neednet{asy/296cedda}
    \ifextended\else\hskip -10pt\fi
    \and
\neednet{asy/123a7120}
  }
  \and
  \opreducts{
\neednet{asy/022fa502}
    \and
\neednet{asy/1de11e11}
  }
}

Reduction rules for list concatenation are as follows:
\display{
  \opreducts{
\neednet{asy/39512f8e}
    \and
\neednet{asy/da09ae81}
  }
  \and
  \opreducts{
\neednet{asy/b15b36b8}
    \and
\neednet{asy/86152a81}
  }
}

\section{Parallel Computational Complexity Analysis}
\label{s:parallel-analysis}

We prove a space–time complexity theorem for the fully parallel reduction, which is then turned into a separate time complexity theorem.
We rely on schedule-typed nodes, i.e. nodes with scheduled typing schemes that include delay connectives and follow the schedule requirements that have been defined in the previous section.

\begin{theorem}[Parallel Space–Time Complexity]
\label{t:parallel}
Associate a function $\pc{γ}_c : T \rightarrow S$, called parallel space–time potential, to every schedule-typed node $c$, such that $\pc{γ}_c(0) ≥ |c|$, $\pc{γ}_{[c]^d}(t+d) ≥ \pc{γ}_c(t)$ and $\abracket{\sum_{c∈L} \pc{γ}_c}(t+d) ≥ \abracket{\sum_{c∈R} \pc{γ}_c}(t)$ for every reduction rule $L \reducts[d] R$.
Whenever $N \reducts[t]_p M$: $$|M| ≤ \abracket{\sum_{c∈N} \pc{γ}_c}(t)$$
\end{theorem}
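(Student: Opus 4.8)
The plan is to follow the proof of Theorem~\ref{t:sequential} verbatim in spirit, with two modifications dictated by the parallel model: the generalized convolution $\coprod$ is replaced by the pointwise sum (juxtaposition combines additively under $\reducts_p$), and the scheduling/counter discipline is absorbed using the delay hypothesis $\pc{γ}_{[c]^d}(t+d) ≥ \pc{γ}_c(t)$. Fix a parallel reduction $N \reducts[t]_p M$ and, following the recursive definition of $\reducts_p$, decompose it into its elementary events: time advances (all counters increased, no redex fired) alternating with simultaneous firings of a batch of matured redexes. I would track the quantity obtained by charging every node $c$ of the current net at the argument $b + κ$, where $b$ is the remaining budget (the duration left before $M$ is reached) and $κ$ is the counter of the active pair $c$ participates in (with $κ = 0$ for a node not sitting in an active pair), and show that $\sum_c \pc{γ}_c(b+κ)$ never increases as the reduction proceeds.

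For a time advance of duration $δ$, every counter grows by $δ$ while $b$ shrinks by $δ$: the charge $b+κ$ of a node inside a maturing redex is therefore unchanged, so those summands are inert, whereas a node carrying genuine input delays ages by $δ$ and is handled termwise by the delayed-computation convenience together with the hypothesis on $[c]^d$, namely $\pc{γ}_{[c']^δ}(b) ≥ \pc{γ}_{c'}(b-δ)$. Summing — and here the additive combinator is exactly the one appearing in the statement — shows the tracked quantity does not increase across an advance. For the simultaneous firing of a batch $B$, the sum splits over the disjoint regions of the net: the non-participating part is copied unchanged, and each fired redex $r$ (an instance of a rule $L_r \reducts[d_r] R_r$) has, at the firing instant, reached counter exactly $d_r$, so its contribution passes from $\sum_{c∈L_r} \pc{γ}_c(b + d_r)$ to $\sum_{c∈R_r} \pc{γ}_c(b)$, which is precisely the content of the rule hypothesis $\abracket{\sum_{c∈L} \pc{γ}_c}(t+d) ≥ \abracket{\sum_{c∈R} \pc{γ}_c}(t)$ read at $t = b$ and $d = d_r$. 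Disjointness of the simultaneously matured redexes (interaction-net redexes never overlap) legitimises the additive decomposition.

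Chaining these inequalities over all elementary events of $N \reducts[t]_p M$ yields $\abracket{\sum_{c∈N} \pc{γ}_c}(t) ≥ \dotsb ≥ \abracket{\sum_{c∈M} \pc{γ}_c}(0) = \sum_{c∈M} \pc{γ}_c(0) ≥ \sum_{c∈M} |c| = |M|$, the last step invoking $\pc{γ}_c(0) ≥ |c|$. The main obstacle, and the only place where the parallel proof genuinely departs from the sequential one, is the accounting for maturing redexes: one must verify that the $d_r$ units of budget formally credited to $L_r$ by the counter offset are exactly those paid back on the output side when the rule hypothesis is applied, and that nodes still charged with a positive counter at the very end of the reduction remain dominated by $\pc{γ}_c(0) ≥ |c|$. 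This is where the standing assumption that top-level output delays in a redex exceed the rule's duration is essential, since it guarantees both that the $[c]^d$-coherence hypothesis applies at the correct argument throughout a redex's lifetime and that the final reconciliation at budget $0$ goes through.
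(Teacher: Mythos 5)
Your proof is correct and takes essentially the same route as the paper's: the published proof is just the two-sentence observation that the rule-level potential-decrease property lifts to atomic parallel reduction steps and then chains to $\abracket{\sum_{c∈N} \pc{γ}_c}(t) ≥ \dotsb ≥ \sum_{c∈M} \pc{γ}_c(0) ≥ |M|$, and your counter-offset accounting (charging each node at budget plus counter, invoking the $[c]^d$ hypothesis for idle nodes that age and the rule hypothesis at the exact firing instant) is precisely the elaboration of that lifting step. You supply considerably more detail than the paper, which leaves the handling of counters and delays entirely implicit.
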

\begin{proof}
The potential-decrease property assumed for reduction rules entails the same property for atomic parallel reduction steps of a net.
Hence, by combining these steps, we obtain $\abracket{\sum_{c∈N} \pc{γ}_c}(t) ≥ \dotsb ≥ \abracket{\sum_{c∈M} \pc{γ}_c}(0) = \sum_{c∈M} \pc{γ}_c(0) ≥ |M|$.
\end{proof}

\begin{corollary}[Parallel Time Complexity]
Associate $\pc{τ}_c ≥ 0$, called parallel time potential, to every schedule-typed node $c$, such that $\pc{τ}_{[c]^d} ≥ \pc{τ}_c + d$ and $\max_{c∈L} \pc{τ}_c ≥ \max_{c∈R} \pc{τ}_c + d$ for every reduction rule $L \reducts[d] R$.
Whenever $N \reducts[t]_p M$: $$t ≤ \max_{c∈N} \pc{τ}_c$$
\end{corollary}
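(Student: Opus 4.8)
The plan is to follow the template of the proof of Theorem~\ref{t:parallel}, establishing a potential-decrease property at the level of atomic parallel reduction steps and then composing. The first thing to stress is that, unlike the sequential time bound of Corollary~\ref{c:sequential-time}, this statement does \emph{not} reduce to an instance of the parallel space–time theorem: there the combining operation on potentials is a pointwise sum, whereas here it is a maximum. Since a right-hand side $R$ may contain arbitrarily many nodes, no choice of space–time potential $\pc{γ}$ turns the bound $|M| ≤ \abracket{\sum_{c∈N}\pc{γ}_c}(t)$ into the max-bound we want (a sum over many small contributions can dominate a single large one). Hence I would argue directly, transporting the two hypotheses on $\pc{τ}$ through one reduction step at a time.

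The core is the following per-step claim: if $N \reducts[t_0]_p M'$ is a single atomic step — one in which the redexes minimizing $d(r)-c(r)$ fire simultaneously while real time advances by $t_0$ — then $\max_{c∈N}\pc{τ}_c ≥ \max_{c'∈M'}\pc{τ}_{c'} + t_0$. To prove it I would take a node $c^*∈M'$ attaining the maximum and trace it back to $N$. If $c^*$ lies in the right-hand side $R$ of a fired rule $L \reducts[d] R$, then $d ≥ t_0$ (because $t_0 = d(r)-c(r) ≤ d(r)$) and the rule hypothesis gives $\pc{τ}_{c^*} ≤ \max_{c∈R}\pc{τ}_c ≤ \max_{c∈L}\pc{τ}_c - d ≤ \max_{c∈N}\pc{τ}_c - t_0$. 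If instead $c^*$ survives from the surrounding context, then its typing instance in $N$ is exactly the $t_0$-delay of its instance in $M'$, i.e. $[c^*]^{t_0}$, since $t_0$ less real time had elapsed and every wire delay is therefore larger by $t_0$; the delay hypothesis $\pc{τ}_{[c^*]^{t_0}} ≥ \pc{τ}_{c^*} + t_0$ then yields $\max_{c∈N}\pc{τ}_c ≥ \pc{τ}_{c^*} + t_0$ as well. Either way the step inequality holds.

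Finally I would compose the steps. Writing the duration-$t$ reduction as a finite chain $N = N_0 \reducts_p N_1 \reducts_p \dotsb \reducts_p N_k = M$ whose elapsed durations sum to $t$, the per-step inequalities telescope to $\max_{c∈N}\pc{τ}_c ≥ \max_{c∈M}\pc{τ}_c + t$, and since every $\pc{τ}_c ≥ 0$ this gives $t ≤ \max_{c∈N}\pc{τ}_c$.

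The step I expect to be the main obstacle is making the per-step claim fully rigorous, i.e. certifying that the maximum really drops by the \emph{entire} elapsed duration $t_0$. Three points need care, all of which lean on the scheduling conventions of Section~\ref{s:scheduled-types}. First, one must reconcile a rule's nominal duration $d$ with the possibly smaller elapsed duration $t_0$ through the counter mechanism (handled above by $t_0 ≤ d$, but the general case with differing nominal durations inside one atomic step deserves a careful statement). Second, one must justify that surviving context nodes have all their delays reduced by exactly $t_0$ — this is where the convention that principal ports carry undelayed types and that top-level output delays of a redex exceed the rule's duration is used, guaranteeing that firing never forces a delay below zero and that the decomposition of $M'$ into fired right-hand sides and a uniformly delay-reduced context is well defined. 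Third, one must treat nodes already at the floor (available now, with minimal potential): when the maximum of $M'$ is attained at such a node the trace-back argument degenerates, but then $M'$ is essentially in normal form and the bound follows directly from the fired redex present in the step.
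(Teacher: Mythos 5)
Your proposal diverges from the paper's proof at the very first move, and the divergence rests on a misreading of what the reduction to Theorem~\ref{t:parallel} requires. The paper \emph{does} obtain this corollary as an instance of the parallel space–time theorem: take $\pc{τ}_c - t$ as the space–time potential $\pc{γ}_c(t)$ and null space-occupation weights, so that the theorem yields $0 = |M| ≤ \sum_{c∈N}(\pc{τ}_c - t)$; a sum is nonnegative only if some summand is, hence $t ≤ \max_{c∈N}\pc{τ}_c$. The point is not that the summed bound reproduces the max bound on the output side (your objection), but that nonnegativity of a sum implies nonnegativity of its maximum. So the route you declare impossible is exactly the one the paper takes, in two lines. (Your observation that the max-form rule hypothesis $\max_{c∈L}\pc{τ}_c ≥ \max_{c∈R}\pc{τ}_c + d$ does not obviously entail the sum-form hypothesis $\sum_{c∈L}\pc{γ}_c(t+d) ≥ \sum_{c∈R}\pc{γ}_c(t)$ when $|R| > |L|$ is a fair criticism of the paper's reduction, but it argues for patching that reduction, not for abandoning it.)

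The direct argument you substitute has a concrete gap in its central per-step claim. In the counter semantics, an atomic parallel step of duration $t_0$ fires only the redexes minimizing $d(r) - c(r)$ and merely advances the counters of the others; the nodes of a pending redex survive unchanged, and since their principal ports are required to carry undelayed types, their typing in $N$ cannot be the $t_0$-delay of their typing in $M'$. Their potentials therefore do not drop at all during that step. If the global maximum of $N$ is attained at such a node — say two redexes of durations $1$ and $5$ with the maximal potential sitting on the slow one — then $\max_{c'∈M'}\pc{τ}_{c'} = \max_{c∈N}\pc{τ}_c$ and the inequality $\max_{c∈N}\pc{τ}_c ≥ \max_{c'∈M'}\pc{τ}_{c'} + t_0$ is simply false for that step, so the telescoping collapses. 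The slack is only recovered when the slow redex eventually fires and its potential drops by its full duration $d$, which exceeds the time accumulated on its counter; making this rigorous requires strengthening the induction invariant to account for counter values rather than telescoping over atomic steps. You flag this as the ``main obstacle'' but do not supply the fix, and it is precisely where the substance of the statement lies.
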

\begin{proof}
Choose $\pc{γ}_c(t) = \pc{τ}_c - t$ and null space-occupation weights; remark that $\abracket{\sum_{c∈N} \pc{γ}_c}(t) ≥ 0$ implies $t ≤ \max_{c∈N} \pc{τ}_c$.
\end{proof}


With the same assumptions as for the sequential reduction,
the following potentials are compatible with reduction rules and therefore provide parallel time ($\pc{τ}$) and space–time ($\pc{γ}$) complexity measures.
Notice that this analysis reveals that multiplication (variant 3), which has an output of quadratic size, can be performed in linear time with a parallel reduction.
\todo{fix values}
\display{
  \cscheme{\pc{τ} = 0 \cr \pc{σ} = 1 \cr \pc{γ} = 1}{
\neednet{asy/be69cad3}
  }
  \and
  \cscheme{\pc{τ} = 0 \cr \pc{σ} = 1 \cr \pc{γ} = 1}{
\neednet{asy/1265cc12}
  }
  \also
  \cscheme{\pc{τ} = nd + 1 \cr \pc{σ} = 1 \cr \pc{γ} = 1 ⊐^d_n 0 ⊐ -2}{
\neednet{asy/58dc184c}
  }
  \and
  \cscheme{\pc{τ} = nd + md + 3 \cr \pc{σ} = nm + 2n + 1 \cr \pc{γ} = 1 ⊐^d_n (-1 ⊐ 3 ⊐^d_m 1 ⊐ 0 ⊐ -2) ⊐^d_m \cr -1 ⊐ -2}{
    \text{(var. 3) \hskip -15pt}
\neednet{asy/e5bc75e4}
  }
  \also
  \cscheme{\pc{τ} = nd + 1 \cr \pc{σ} = n + 1 \cr \pc{γ} = 1 ⊐^d_n 1 ⊐ 0}{
\neednet{asy/6887e5af}
  }
  \and
  \cscheme{\pc{τ} = nd + 1 \cr \pc{σ} = 1 \cr \pc{γ} = 1 ⊐^d_n -1 ⊐ -2}{
\neednet{asy/e69e6496}
  }
}

For parallel reductions, simple space complexities do compose, but fail to give precise bounds (this is why we omitted the corresponding corollary).
The accurate space complexities $\pc{σ}$ reported above were extracted from the complete space–time complexity analysis as the maximum value of $\pc{γ}$.

\todo{example of concrete application}

\section{Case Study: Higher Order}
\label{s:higher-order}

We can analyze higher-order functional programs in a weak sequential cost model, using the interaction-net framework extended by \emph{functorial promotion boxes} and the associated \emph{weakening}, \emph{contraction}, \emph{dereliction} and \emph{digging} nodes as presented in Section~\ref{s:interaction-nets}.
The sized versions of exponential types are simply expressed as multisets of types.
This allows the use of resources to be heterogeneous.
We use $\oc_{i<n} A_i$ as syntactic sugar to denote the multiset $\{A_0, \dotsc, A_{n-1}\}$ and write $\oc_n A$ for homogeneous multisets that contain a single element with multiplicity $n$.
The empty multiset is denoted by $\msetempty$ and the union by $\msetunion$.
Space occupation of boxes and the duration of box reductions are assumed to be unitary (which is arguably an important simplification, but one similar to attributing a constant cost to a $\beta$-reduction).

Sized typing schemes and potentials can be assigned as follows.
In particular, typing the interface of a box with multisets of size $n$ (they must have the same size) requires typing its contents $n$ times.
Requested types for the interface of the contents have been indexed by $i∈[0,n-1]$.
Each typing of the contents corresponds recursively to some resource usage $\sc{γ}_i$ (respectively $\sc{τ}_i$ or $\sc{σ}_i$).
The potential $\sc{γ}$ (respectively $\sc{τ}$ or $\sc{σ}$) of the box is expressed as a function of these resource usages.
\display{
  \cscheme{\sc{τ} = 0 \cr \sc{σ} = 1 \cr \sc{γ} = 1}{
\neednet{asy/7545785a}
  }
  \and
  \cscheme{\sc{τ} = 1 \cr \sc{σ} = 1 \cr \sc{γ} = 1 ⊐ -2}{
\neednet{asy/25791321}
  }
  \also
  \cscheme{\sc{τ} = \sum_{i<n} \sc{τ}_i \cr \sc{σ} = 1 + \sum_{i<n} \sc{σ}_i \cr \sc{γ} = 1 * \coprod_{i<n} \sc{γ}_i}{
\neednet{asy/6502622e}
  }
  \also
  \cscheme{\sc{τ} = 1 \cr \sc{σ} = 1 \cr \sc{γ} = 1 ⊐ -2}{
\neednet{asy/60f4b4a8}
  }
  \and
  \cscheme{\sc{τ} = 1 \cr \sc{σ} = 1 \cr \sc{γ} = 1 ⊐ 0}{
\neednet{asy/453729d7}
  }
  \also
  \cscheme{\sc{τ} = 1 \cr \sc{σ} = 1 \cr \sc{γ} = 1 ⊐ -2}{
\neednet{asy/a2d1af1e}
  }
  \and
  \cscheme{\sc{τ} = 1 \cr \sc{σ} = 1 \cr \sc{γ} = 1 ⊐ 0}{
\neednet{asy/7406a85d}
  }
}

Multiset inclusion is admissible as subtyping.
One can check that with these typing schemes and potentials the related (already presented) six reduction rules satisfy the potential-decrease property.

One can also verify that the usual $\ntext{map}$ and $\ntext{fold}$ admit the following typing schemes and potentials:
\display{
  \cscheme{\sc{τ} = 2 + 4n \cr \sc{σ} = 1 + 4n}{
\neednet{asy/6def2da0}
  }
  \and
  \cscheme{\sc{τ} = 2 + 4n \cr \sc{σ} = 1 + 4n}{
\neednet{asy/4030555e}
  }
}

\ifextended
Size-typed reduction rules for $\ntext{map}$ are as follows:
\display{
  \opreducts{
\neednet{asy/db17117a}
    \and
\neednet{asy/8a5cb261}
  }
  \and
  \opreducts{
\neednet{asy/4862b46b}
    \and
\neednet{asy/89c2efec}
  }
}

Size-typed reduction rules for $\ntext{fold}$ are as follows:
\display{
    \opreducts{
\neednet{asy/793ab4b5}
    \and
\neednet{asy/fef7a5e4}
  }
  \and
  \opreducts{
\neednet{asy/12f498c6}
    \and
\neednet{asy/604f34d3}
  }
}
\fi

Resource usage of functional arguments to $\ntext{map}$ and $\ntext{fold}$ are taken into account as the potential of the box which holds them.
For example:
\begin{itemize}
  \item Mapping a $\ntext{succ}: \tnat{p} ⊸ \tnat{p+1}$ operation with potential $\sc{τ} = 0$ and $\sc{σ} = 1$ is done as follows:
\display{
\neednet{asy/97978a07}
}

This defines an operation of type $\tlist{n}{\tnat{u}} ⊸ \tlist{n}{\tnat{u+1}}$ with a total potential of $\sc{τ} = 2 + 4n + \sum_{i<n} 0 = 2 + 4n$ and $\sc{σ} = 1 + 4n + 1 + \sum_{i<n} 2 = 2 + 6n$.

\item Folding an $\ntext{add} : \tnat{p} ⊸ \tnat{q} ⊸ \tnat{p+q}$ operation with potentials $\sc{τ} = 1 + p$ and $\sc{σ} = 1$ demands instantiating $A$ as $\tnat{u}$ and $B_i$ as $\tnat{v+iu}$ and defines an operation of type $\tlist{n}{\tnat{u}} ⊸ \tnat{v} ⊸ \tnat{v+nu}$ with potentials $\sc{τ} = 2 + 4n + \sum_{i<n} (1 + u) = 2 + 5n + nu$ and  $\sc{σ} = 1 + 4n + 1 + \sum_{i<n} 3 = 2 + 7n$.

\ifextended
\item Similarly, folding a $\ntext{cat} : \tlist{p}{X} ⊸ \tlist{q}{X} ⊸ \tlist{p+q}{X}$ operation with potentials $\sc{τ} = 1 + p$ and $\sc{σ} = 1$ demands instantiating $A$ as $\tlist{u}{X}$ and $B_i$ as $\tlist{u+iv}{X}$ and defines a function of type $\tlist{n}{(\tlist{u}{X})} ⊸ \tlist{v}{X} ⊸ \tlist{v+nu}{X}$ with potentials $\sc{τ} = 2 + 5n + nu$ and $\sc{σ} = 2 + 7n$.
\fi

\item Folding a $\ntext{mul} : \tnat{p} ⊸ \tnat{q} ⊸ \tnat{pq}$ operation with potential $\sc{τ} = 2pq + 2p + q + 2$ and $\sc{σ} = pq + p + 1$ demands instantiating $A$ as $\tnat{u}$ and $B_i$ as $\tnat{vu^i}$ and defines an operation of type $\tlist{n}{\tnat{u}} ⊸ \tnat{v} ⊸ \tnat{vu^n}$ with potentials $\sc{τ} = 2 + 4n + \sum_{i<n} (2vu^{i+1} + 2u + vu^i + 2) = 2 + 6n + 2nu + v(u^n + 2u^{n+1} - 3)$ and $σ = 1 + 4n + 1 + \sum_{i<n} (2 + 2vu^{i+1} + 3u + 1) = 2 + 7n + 3nu + v(u^n - 1)$, which are exponential in the size of the list.

\end{itemize}

\subparagraph{Example of a concrete application to program complexity certification}
The net built by providing interaction-net representations of a list of length $n$ containing natural numbers bounded by $u$ and an initial value bounded by $v$ (both have null sequential time potentials) as arguments to the net that folds multiplication has combined time potential $\sc{τ} = 2 + 6n + 2nu + v(u^n + 2u^{n+1} - 3)$.
According to Corollary~\ref{c:sequential-time}, this constitutes a concrete bound on the sequential time needed to perform this operation as a function of the sizes of its arguments.

\subparagraph{An analysis of merge sort}
Assuming $A$ is replicable, we denote by $\tcomp{n}{A} = \oc_n {(A ⊸ A ⊸ \tbool)}$ the type of comparison functions which can be called $n$ times.
Let $f, g, h: ℕ → ℕ$ be recursively defined as $f(n) = n - 1 + f(\fh n) + f(\ch n)$, $g(n) = 1 + g(\fh n) + g(\ch n)$ and $h(n) = n + h(\fh n) + h(\ch n)$ if $n > 1$ or null otherwise.
In particular $g(n) = n-1$ and
$f(n) = h(n) - g(n) ≤ n \log_2 n$ for $n > 0$.
We consider the following nodes:
\display{
  \cscheme{\sc{τ} = n + 1 \cr \sc{σ} = 1}{
\neednet{asy/2c70ff9a}
  }
  \and
  \cscheme{\sc{τ} = 3n + 3m + 2 \cr \sc{σ} = 4}{
\neednet{asy/1f1fc76e}
  }
  \also
  \cscheme{\sc{τ} = 4h(n) + 7g(n) + 2 \cr \sc{σ} = 8g(n) + 1}{
\neednet{asy/52536146}
  }
}

\ifextended\else
The operations $\ntext{split}$ and $\ntext{merge}$ are standard.
Showing that they satisfy the above typing schemes is not difficult.
\fi
Reduction rules for \emph{merge sort} are as follows (we display them as nested patterns \cite{interaction-nets-with-nested-pattern-matching} to avoid the explicit introduction of auxiliary nodes):
\display{
  \opreducts{
\neednet{asy/030a98a1}
    \and
\neednet{asy/63d6939f}
  }
  \and
  \opreducts[2]{
\neednet{asy/9bf98d4a}
    \and
\neednet{asy/29c3d21d}
  }
  \opreducts[2]{
\neednet{asy/04a1cce1}
    \ifextended\else\hskip -10pt\fi
    \and
    \ifextended\else\hskip -10pt\fi
\neednet{asy/8c3d1968}
  }
}

\ifextended
Implementations of $\ntext{split}$ can be obtained as follows:
\display{
  \opreducts{
\neednet{asy/59fcb8d0}
    \and
\neednet{asy/5282f268}
  }
  \and
  \opreducts{
\neednet{asy/6c1fcc86}
    \and
\neednet{asy/18f92a36}
  }
}

\fi

Potential-decrease properties follow from the recursive definitions of $g$ and $h$.

\section{Related Work}
\label{s:related}

\subparagraph{Complexity Analysis of Interaction Nets}
Despite the conceptual importance of interaction nets, little is known about complexity analysis of reductions in interaction nets, Perrinel's work~\cite{Perrinel:2014} on context semantics of interaction nets~\cite{GAL:1992} being the exception.
Perrinel employs context semantics to assign a global weight $W$ to a net $N$ which bounds the length of sequential reductions.
This is obtained analogously to (and is actually based on)~\cite{DalLago:2009}.
In contrast to our Corollary~\ref{c:sequential-time} this approach is not compositional; it requires an in-depth context semantic analysis of the net $N$ for a given input (equivalent to execution) and does not offer a generic bound.

\subparagraph{Implicit Computational Complexity and Program Analysis}
In \emph{implicit computational complexity} and in other program-analysis areas of research, space–time complexity functions are seldom used directly.
We highlight \cite{BL:CSL:12}, in which a type-based termination argument is fused together with a semantic size argument in order to classify polytime-computable higher-order functional programs.
\emph{Space–time weights} are defined which are conceptually related to the potential functions $\sc{γ}_c$.
This result is comparable to our Corollary~\ref{c:sequential-time}, although restricted to  polynomial time bounds. On the other hand, no results for space analysis or parallel reduction  have been provided, like in our work.
%
With respect to program analysis, we highlight work by Brockschmidt et al.~\cite{BEFFG:2014} in which an alternating \emph{size} and \emph{time complexity} analysis of integer transition systems is made explicit.
Similar ideas are exploited in~\cite{SZV:2014}.
The focus is on automation for first-order systems.
Automation of time complexity analysis of higher-order programs has for example been considered in~\cite{ADM:2015}.
No results for space analysis or parallel reduction have been provided.
We will consider automation of our method in future work.

\subparagraph{Sized Types and Amortized Cost Analysis}
\emph{Sized types} is a streamlined notion related to \emph{linear dependent types}~\cite{DP:2014}.
In contrast to the analysis provided by Vasconcelos~\cite{Vasconcelos:2008} our work is purely focused on user-defined annotations with sized types and additionally employs potentials based on the size annotation, in the spirit of~\cite{Atkey:2011}.
This is also different from \emph{amortized cost analysis}, where potentials are defined on values and no previous size annotation is required.
For example, Hoffmann and Shao~\cite{HS:2015} provide an extension of earlier multivariate amortized cost analyses \cite{HAH12} to parallel reductions.
The costs of parallel executions are essentially approximated using cost-free metrices in addition to a size change analysis.
While sized types and amortized cost analysis give rise to powerful automated techniques, the method proposed in this paper is more versatile and allows to incorporate non-local size analysis.


\subparagraph{Concurrency and Scheduling}
Ghica and Smith provide in~\cite{GS:2014} a generalization of bounded linear logic~\cite{GSS:1992} to bounded linear types over a semiring, which equipped with a suitable SMT solver yields a straightforward automation of type inference.
The concept of \emph{co-monadic} resource sensitivity is emphasized. It amounts to a shift in paradigm.
Instead of precisely witnessing resource use through the type system, as we have advocated here, the focus is on defining scheduling of computational tasks, based on the precise typing.

\section{Conclusion}
\label{s:conclusion}

Because input-focused complexity analysis is not compositional, analyzing properties of outputs is a necessary complement to analyzing time or space requirements.
Size annotations can be added to types and validated by a suitable typing of reduction rules.
For parallel reductions, schedule-bound transmission of (partial) data composes easily and ensures productivity: timing assumptions on inputs entail timing guaranties on outputs.
From there, we straightforwardly obtained sequential and parallel complexity bounds by assigning potentials to typed interaction-net nodes.


In particular, complexity analysis of parallel reductions may be used to optimize the dispatch and scheduling of computation tasks in distributed environments.
In general, we expect that the present resource analysis will prove useful in the design and implementation of modern hardware solutions.

To conclude, inspired by bounded linear logic \cite{GSS:1992}, we were able to extend the complexity analysis to interesting higher-order programs.
As it turns out, \emph{merge sort} can be typed $\oc_{n \log_2 n} (A \mathbin{⊸} A \mathbin{⊸} \mathsf{bool}) \mathbin{⊸} \mathsf{list}_n\,A \mathbin{⊸} \mathsf{list}_n\,A$, in which the index on the linear logic modality ensures a concrete $n \log_2 n$ bound on the number of calls to the comparison function.
Using a refined version of typing for nets \cite{the-structure-of-interaction--s-gimenez+g-moser}, we plan to extend our computational complexity results for higher-order programs to non-weak and optimal sequential reductions as well as to parallel reductions.


\todo{remove the bibtex hack for urls}

\ifextended\else
\section*{Acknowledgements}
We are grateful to the anonymous reviewers whose comments greatly helped us to improve the presentation of the paper.
This work was partially supported by FWF (Austrian Science Fund) project number P 25781-N18 and Draper Labs, project number 15-B13.
\fi

\printbibliography

\end{document}